\newcommand{\hideshow}[1]{{\mbox{}}}
\definecolor{mygreen}{rgb}{0,0.6,0}
\definecolor{mygray}{rgb}{0.5,0.5,0.5}
\definecolor{mymauve}{rgb}{0.58,0,0.82}
\tiny\color{mygray}, 
\newcommand{\p}[1]{\mbox{$[\![#1]\!]$}}
 \newcommand{\reguno}[2]
  {
  $  \ \ \frac{\textstyle #1}{\textstyle #2} $
  }
\newcommand{\E}[2]{\ensuremath{{\epsilon}}}
\newcommand {\eg}        {{\textit{e}.\textit{g}.}}
\newcommand {\ie}        {{\textit{i}.\textit{e}.}}
\newtheorem{definition}{Definition}
\newtheorem{theorem}{Theorem}
\newtheorem{proposition}{\bf Proposition}
\newtheorem{lemma}{Lemma}
\newcommand{\bang}{\mbox{!}}
\begin{document}
\lstset{language=Erlang}
\title{$\lambda$-calculus and Reversible Automatic Combinators}

\author[1]{Alberto Ciaffaglione}
\author[1]{Furio Honsell}
\author[1]{Marina Lenisa}
\author[1]{Ivan Scagnetto}
\affil[1]{Department of Mathematics, Computer Science, and Physics\\
  University of Udine, Italy\\
  {\scriptsize\texttt{\{alberto.ciaffaglione,furio.honsell,marina.lenisa,ivan.scagnetto\}@uniud.it}}}

\maketitle

\begin{abstract}
 In 2005, S. Abramsky introduced various {\em linear/affine combinatory algebras} consisting of {\em partial involutions} over a suitable formal language, in order to discuss {\em reversible computation} in a game-theoretic setting. These algebras  arise as instances of the general paradigm explored by E.~Haghverdi, called ``Abramsky's Programme'', which amounts to defining a {\em $\lambda$-algebra} starting from a so called {\em GoI Situation} in a ``traced symmetric monoidal category''. We recall that GoI is the acronym for {\em ``Geometry of  Interaction''}. This was invented by J.Y.Girard in order to model, in a language-independent way, the fine semantics of {\em Linear Logic}.

In this paper, we investigate Abramsky's construction from the point of view of the model theory of $\lambda$-calculus. We focus on the \emph{strictly linear} and the {\em strictly affine} parts of Abramsky's Affine Combinatory Algebras, and we outline briefly, at the end, how  the {\em full} algebra can be encompassed.

The gist of our approach is that the interpretation of a term based on involutions is ``dual'' to the {\em principal type} of the term, with respect to the {\em simple types discipline}  for a linear/affine  {\em $\lambda$-calculus}. Thus our analysis unveils three conceptually independent, but ultimately equivalent, accounts  of {\em application} in the $\lambda$-calculus: {\em $\beta$-reduction}, the GoI application of involutions based on symmetric feedback (Girard's {\em Execution Formula}), and {\em unification} of principal types. Somehow surprisingly, this equivalence had not been hitherto pointed out.

Our result permits us to provide an answer, in the strictly affine case,  to the question raised in \cite{Abr05} of characterising the partial involutions arising from bi-orthogonal pattern matching automata,  which are denotations of affine combinators, and it points to the answer to the full question.
Furthermore, we prove that the strictly linear combinatory algebra of partial involutions is a strictly linear $\lambda$-algebra, albeit not a combinatory model, while both the strictly affine combinatory algebra and the full affine combinatory algebra are not.\\
In order to check all the necessary equations involved in the definition of affine $\lambda$-algebra, we implement in  Erlang  application of involutions,  as well compilation of $\lambda$-terms as combinators and their interpretation as involutions.
 \end{abstract}

\section{Introduction}
In \cite{Abr05}, S. Abramsky discusses {\em Reversible Computation} in a game-theoretic setting. In par\-ti\-cu\-lar, he introduces various kinds of reversible {\em  pattern-matching automata} whose behaviour can be described in a {\em finitary way} as partial injective functions,  actually {\em involutions}, over a suitable language.  These  yield affine {\em combinatory algebras} w.r.t. a notion of application between automata, for which the flow of control is analogous to the one between history-free strategies in game models. Similar constructions appear in various papers by S. Abramsky, \eg\ \cite{AHS02,AL05}, and are special cases of a general categorical paradigm explored by E. Haghverdi \cite{Hagh00}, Sections~5.3,~6, called ``Abramsky's Programme''. This Programme amounts to defining a {\em $\lambda$-algebra} starting from a {\em GoI Situation} in a ``traced symmetric monoidal category'', where application arises  from {\em symmetric feedback}/Girard's {\em Execution Formula}. We recall that GoI is the acronym for ``Geometry of Interaction'', an approach invented by J.~Y.~Girard~\cite{Girard89,Girard90} in order to model, in a language-independent way, the fine semantics of {\em Linear Logic}.

In this paper, we discuss Abramsky's algebras from the point of view of the model theory of $\lambda$-calculus.
We focus on the \emph{strictly linear} and \emph{strictly affine} parts of Abramsky's affine algebras. In particular,
we consider strictly linear and strictly affine  combinatory logic, their $\lambda$-calculus counterparts, and their models, \ie\ {\bf BCI}{\em -combinatory algebras} and {\bf BCK}{\em -combinatory algebras}.
For each calculus we discuss also the corresponding notion of {\em $\lambda$-algebra}. This notion was originally introduced by D. Scott for the standard  $\lambda$-calculus as the appropriate notion of categorical model for the calculus, see Section~5.2 of~\cite{Bar84}.

The gist of our approach is that the {\em Geometry of Interaction} interpretation of a $\lambda$-term in Abramsky's model of partial involutions  is ``dual'' to the {\em principal type} of that term w.r.t. the \emph{simple  types discipline}  for a linear/affine {\em $\lambda$-calculus}.

In particular, we define an algorithm which, given a principal type of a $\lambda$-term, reads off the partial involution corresponding to the interpretation of that term. Conversely, we show how to extract a principal type from a partial involution, possibly not corresponding to any $\lambda$-term.
Moreover, we show that the principal type of an affine $\lambda$-term provides a {\em dual} characterisation of the partial involution interpreting the term in Abramsky's model.
The overall effect of the GoI notion of application amounts to {\em unifying} the left-hand side of the principal type  of the operator with the principal type  of the operand, and applying the resulting substitution to the right hand side of the operator. Hence, the notion of application between partial involutions, corresponding to $\lambda$-terms $M$ and $N$,  can be explained as computing the involution corresponding to the principal type of $MN$, given the principal types of $M$ and $N$.

This analysis, therefore, unveils three conceptually independent, but ultimately equivalent, accounts  of {\em application} in the $\lambda$-calculus: {\em $\beta$-reduction}, the GoI application of involutions based on symmetric feedback/Girard's {\em Execution Formula}, and {\em unification} of principal types. Somehow surprisingly, this equivalence had not been hitherto pointed out.

Our results provide an answer, for the strictly affine part, to the question raised in \cite{Abr05} of characterising the partial involutions
(arising from bi-orthogonal pattern matching automata),
which are denotations of  combinators. We show that these are precisely those partial involutions whose corresponding principal type is the principal type of a $\lambda$-term.
In our view, this insight sheds new light on the deep nature of Game Semantics itself.

We prove, furthermore, that the strictly linear combinatory algebra of partial involutions is also a strictly linear $\lambda$-algebra, albeit not a combinatory model, while both the strictly affine combinatory algebra and the affine combinatory algebra are not $\lambda$-algebras. We also show  that the last step of Abramsky's programme, namely the one taking from a linear/affine combinatory algebra to a {\em $\lambda$-algebra}, is not immediate, since in general combinatory algebras cannot be quotiented non trivially to obtain $\lambda$-algebras.

In order to check all the necessary equations of $\lambda$-algebras, we implement in Erlang~\cite{Erlang1,Erlang2}  application of involutions, as well compilation of $\lambda$-terms as combinators and their interpreation as involutions.

In the final remarks,  we briefly outline how to extend the above results to the full affine $\lambda$-calculus (\ie\ the $\lambda$-calculus extended with a $!$-operator and a corresponding pattern-abstraction) w.r.t.\! an extension of the \emph{intersection type discipline} with a $!_u$-constructor. Intersection types originated in \cite{BCD} and have been utilised in discussing games in a different approach also in \cite{DGHL,DGL13}.

A clarification is in order as far as the use of the term {\em reversible} by Abramsky and, hence, in the title of the present, although we do not develop it further here. Abramsky's algebras provide indeed {\em universal models of reversible computation}, but not in the simplistic sense that application between combinators is itself reversible. What is reversible is the evaluation of the  partial involution interpreting a combinator. Since the partial involutions interpreting, say 0 and 1, have different behaviours on a simple ``tell-tale'' word, we can {\em test reversibly} any characteristic function expressed in terms of combinators, without evaluating the overall combinator. The {\em finitary descriptions} of the partial involutions interpreting the combinators, therefore, have full right to be called {\em reversible combinators}.

\smallskip

\noindent {\bf Synopsis.} In Section~\ref{sacl}, we introduce the strictly linear and strictly affine versions of: combinatory logic, $\lambda$-calculus, combinatory algebra, and combinatory model, and we isolate the equations for the strictly linear and strictly affine combinatory algebras to be $\lambda$-algebras. In Section~\ref{tre}, we provide a type discipline for the strictly linear and strictly affine $\lambda$-calculus, and we define a corresponding notion of principal type. In Section~\ref{qua}, we recall Abramsky's combinatory algebra of partial involutions, and we provide a characterisation of partial involutions via principal types, in the strictly affine case. Furthermore, we prove that partial involutions are a strictly linear $\lambda$-algebra but they are not a strictly affine (nor an affine) $\lambda$-algebra. In Section~\ref{erlang}, we discuss  the implementation in Erlang of the application between partial involutions, and compilation and interpretation of $\lambda$-terms. Concluding remarks appear in Section~\ref{fin}.
The Appendix includes the detailed Erlang programs implementing compilations and effective operations on partial involutions.

\section{Strictly Linear Notions and their Strictly Affine Extensions }\label{sacl}

First we introduce {\em strictly linear} versions of combinatory logic, $\lambda$-calculus,  and combinatory algebras.
For each notion we also introduce  corresponding  {\em  strictly affine} extensions, which include  constant operations and the combinator {\bf K}.
These notions are the restrictions of the corresponding notions of  combinatory logic and $\lambda$-calculus \cite{Bar84} to the purely linear (affine) terms.

We assume the reader familiar with the basic notations and results in combinatory logic and $\lambda$-calculus, as presented \eg\ in \cite{Bar84}, and in \cite{Abr05}, but we try to be self-contained as much as possible.

\begin{definition}[Strictly Linear (Affine) Combinatory Logic]\label{sacl}
The language of {\em strictly linear (affine) combinatory logic} $\mathbf{CL^L}$  ($\mathbf{CL^A}$) is generated by variables $x, y, \ldots $ and constants, which include the distinguished constants (combinators) $B, C,  I$ (and $K$ in the affine case) and
it is closed under application, \ie:\medskip
\\
\reguno{M \in \mathbf{CL^X} \ \ \  N \in \mathbf{CL^X}
}{M \cdot\ N \in \mathbf{CL^X} } \ \ \ \ \ for $X\in \{ L,A \}$
 \medskip
\\ Combinators satisfy the following
equations (we associate $\cdot$ to the left and omit it when clear from the context):  \smallskip
  \\
\begin{tabular}{llll}
  $BMNP = M(NP)$ \ \ \ \ &       $IM = M$ \ \ \ \     &      $CMNP = (MP)N$   \ \ \ \ &   $K MN = M   $

\end{tabular}\\
\noindent where  $M,N,P$ denote terms  of combinatory logic.
\end{definition}
\begin{definition}[Strictly Linear (Affine) Lambda Calculus]
The language $\mathbf{\Lambda^L}$ ($\mathbf{\Lambda^A}$) of the {\em strictly linear  (affine) $\lambda$-calculus, \ie\ $\lambda^L$-calculus ($\lambda^A$-calculus)} is inductively defined from  variables $x,y,z, \ldots \in \mathit{Var}$, constants $c, \ldots \in \mathit{Const}$, and it is closed under  the following  formation rules:\medskip
\\
$\mathbf{\Lambda^L}$: \reguno{M \in \mathbf{\Lambda^L}\ \ \  N \in \mathbf{\Lambda^L} }{MN \in \mathbf{\Lambda^L}}\ \
\reguno{M \in \mathbf{\Lambda^L} \ \ {\cal E} (x, M)}{\lambda x.M \in \mathbf{\Lambda^L}} \ \ \ \smallskip
\\ $\mathbf{\Lambda^A}$:
 \reguno{M \in \mathbf{\Lambda^A}\ \ \  N \in \mathbf{\Lambda^A} }{MN \in \mathbf{\Lambda^A}}\ \
\reguno{M \in \mathbf{\Lambda^A} \ \ {\cal O} (x, M)}{\lambda x.M \in \mathbf{\Lambda^A}}
\medskip
\\
where ${\cal E} (x, M)$ means that the variable $x$ appears free  in $M$ {\em exactly once}.
\\
where ${\cal O} (x, M)$ means that the variable $x$ appears free in $M$ {\em at most once}.
\smallskip

The rules of the $\lambda^L$-calculus ($\lambda^A$-calculus) are the restrictions of the standard $\beta$-rule and  $\xi$-rule to strictly linear (affine) abstractions, namely: \medskip

\noindent $(\beta_L) \   (\lambda x. M)N = M[N/x]$ \ \ \ \
\noindent $(\xi_L)$
 \reguno{M = N \ \ {\cal E}(x,M)\ \  {\cal E}(x,N) }{\lambda x. M = \lambda x.N}.
\smallskip

\noindent
$(\beta_A) \  (\lambda x. M)N = M[N/x]$ \ \ \ \
\noindent $(\xi_A)$ \
 \reguno{M = N \ \ {\cal O}(x,M)\ \   {\cal O}(x,N)}{\lambda x. M = \lambda x.N}.

\medskip
All the remaining rules are the standard rules which make $=$ a congruence.
\end{definition}

\begin{proposition}
Well-formedness in $\mathbf{\Lambda^{L}}$ ($\mathbf{\Lambda^{A}}$), \ie\ strictly linear (affine) $\lambda$-abstractions are preserved under $\lambda$-reduction. The corresponding reduction calculi are  Church-Rosser.
\end{proposition}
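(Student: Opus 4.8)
The plan is to prove the two assertions in sequence, using preservation of well-formedness as the engine for confluence. Throughout I would write $\#_{z}(M)$ for the number of free occurrences of a variable $z$ in a term $M$, so that $\mathcal{E}(x,M)$ reads $\#_{x}(M)=1$ and $\mathcal{O}(x,M)$ reads $\#_{x}(M)\le 1$, and I would work up to $\alpha$-conversion under the variable convention that bound variables are kept distinct from the free variables in sight (in particular, if $(\lambda x.M)N$ is a redex then $x\notin\mathsf{Fv}(N)$). The cornerstone is a counting lemma for a single contraction: if $(\lambda x.M)N$ is well-formed, then for every variable $z\neq x$ one has $\#_{z}(M[N/x]) = \#_{z}(M) + \#_{x}(M)\cdot \#_{z}(N)$. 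In the linear case $\#_{x}(M)=1$, so every free-occurrence count is preserved exactly by contraction; in the affine case $\#_{x}(M)\in\{0,1\}$, so every count either stays equal or decreases.

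Next I would show that contracting the top-level redex preserves well-formedness, i.e.\ that $M[N/x]$ is well-formed whenever $(\lambda x.M)N$ is. Since $x$ occurs at most once in $M$, the substitution inserts the tree $N$ into at most one leaf of $M$ and duplicates nothing, so every abstraction of $M[N/x]$ is either (a) an abstraction of $N$, reproduced a single time and hence still satisfying its side condition, or (b) an abstraction $\lambda y.P$ inherited from $M$; by the variable convention $y\notin\mathsf{Fv}(N)$, whence $\#_{y}(P[N/x]) = \#_{y}(P)$ and the constraint on $\lambda y$ survives. I would then lift this to reduction in an arbitrary context by induction on the position of the contracted redex. The application case is immediate (application carries no side condition), and the only delicate case is $\lambda y.M \to \lambda y.M'$: here, because the contracted sub-redex is itself well-formed (it sits inside a well-formed term), the counting lemma yields $\#_{y}(M')=\#_{y}(M)$ in the linear case and $\#_{y}(M')\le\#_{y}(M)$ in the affine case, so $\mathcal{E}(y,\cdot)$ (resp.\ $\mathcal{O}(y,\cdot)$) is retained. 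Iterating over a reduction sequence gives the first claim.

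For Church--Rosser I would exploit that $\mathbf{\Lambda^{L}}$ and $\mathbf{\Lambda^{A}}$ are sub-calculi of the ordinary $\lambda$-calculus with (inert) constants, whose $\beta$-reduction is already confluent, and that $\beta_{L}$ and $\beta_{A}$ are literally the restriction of ordinary $\beta$-contraction to well-formed terms. Given a well-formed $M$ with $M\to^{*}M_{1}$ and $M\to^{*}M_{2}$, confluence of the full calculus supplies a common reduct $M_{3}$ with $M_{1}\to^{*}M_{3}$ and $M_{2}\to^{*}M_{3}$; by the first claim every intermediate term along these reductions is well-formed, so $M_{3}$ is well-formed and both reductions already live inside the sub-calculus. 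This transfers confluence to $\mathbf{\Lambda^{L}}$ and $\mathbf{\Lambda^{A}}$. As an alternative route, since linear contraction strictly decreases term size and affine contraction never increases it, one obtains strong normalisation and could instead combine local confluence with Newman's Lemma.

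I expect the only genuine obstacle to be the bookkeeping in the second step: the whole argument rests on the fact that substitution inside a strictly linear/affine redex neither duplicates nor captures bound variables, so the care lies entirely in the occurrence-count algebra and in the $\alpha$-conversion conventions guaranteeing $y\notin\mathsf{Fv}(N)$ in case (b). Once preservation is secured, confluence is essentially inherited for free.
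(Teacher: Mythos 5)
Your proposal is correct, and it is essentially the argument the paper has in mind: the paper's own proof is just the single word ``Routine,'' and the standard routine argument is exactly yours --- an occurrence-counting lemma for substitution showing that a strictly linear (affine) redex neither duplicates nor captures, hence well-formedness is preserved, followed by inheritance of confluence from the full $\lambda$-calculus restricted to well-formed terms. One small remark on your alternative route: ``affine contraction never increases size'' is not by itself enough for strong normalisation (size could stay constant forever); fortunately affine contraction in fact \emph{strictly} decreases size (by at least the size of the discarded argument plus the two deleted nodes), so the Newman's-Lemma route goes through as well.
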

\begin{proof}Routine.
\end{proof}

In the sequel of this section, for conciseness,  we discuss only the $\lambda^A$-calculus, since the corresponding notions/results carry over straightforwardly to the strictly linear version by simple restriction.

We start by specialising to the affine case the results in \cite{Bar84} on the encoding of $\lambda$-calculus into combinatory logic.

\begin{definition}
We define two homomorphisms w.r.t. application:
\\ (i) $(\ )_{\lambda^A}: \mathbf{CL^A}\rightarrow \mathbf{\Lambda^A}$, given a term $M$ of   $ \mathbf{CL^A}$, yields the term of $ \mathbf{\Lambda^A}$ obtained from $M$
by replacing each combinator with the corresponding $ \mathbf{\Lambda^A}$-term as follows
\\ \begin{tabular}{llll}
 $ (B)_{\lambda^A}= \lambda x y z. x(yz) $ \ \ \   & $(I)_{\lambda^A}= \lambda x. x $\ \ \
 $(C)_{\lambda^A} = \lambda x y z.(xz)y   $\ \ \    & $   (K)_{\lambda^A} = \lambda x  y . x   $
\\
 \end{tabular}
 \\ (ii)  $(\ )_{CL^A}: \mathbf{\Lambda^A}\rightarrow \mathbf{CL^A}$, given a term  $M\in \Lambda^A$, replaces each $\lambda$-abstraction
 by a $\lambda^*$-abstraction.  Terms with $\lambda^*$-abstractions  amount to  $\mathbf{CL^A}$-terms via the \emph{Abstraction Operation} defined below. \end{definition}

 \begin{definition}[Strictly Affine Abstraction Operation] \label{op} The following operation,  defined by induction on  $M\in \mathbf{CL^A}$, provides an encoding of
 $\lambda^A$-calculus into $\mathbf{CL^A}$:
\\  $\lambda^{*}x. x  = I$ \ \ \  $\lambda^{*}x. c  = Kc$ \ \ \ $\lambda^{*}x. y  = Ky$\ , for $c\in \mathit{Const}$, $x\neq y$
\\ $\lambda^* x. MN = \begin{cases}
C (\lambda^* x.M) N & \mbox{ if } x\in FV(M),
\\ BM (\lambda^* x.N)  & \mbox{ if }  x\in FV(N),
\\ K(MN) & \mbox{ otherwise.}
\end{cases}$
\end{definition}

\begin{theorem}[Strictly Affine Abstraction Theorem] \label{absth}
For all terms $M,N \in \mathbf{CL^A}$, $(\lambda^* x.M) N = M[N/x]$.
\end{theorem}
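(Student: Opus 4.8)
The plan is to prove the identity $(\lambda^* x.M)N = M[N/x]$ by structural induction on $M \in \mathbf{CL^A}$, treating $N$ as a fixed but arbitrary term that plays no active role in the induction: it is simply carried along. At each step I would expand the clause of the abstraction operation (Definition~\ref{op}) that governs the head shape of $M$, apply the matching combinator equation ($BMNP = M(NP)$, $IM = M$, $CMNP = (MP)N$, $KMN = M$), invoke the induction hypothesis where a proper subterm appears, and then compare the result against the syntactic substitution $M[N/x]$.

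For the base cases I would dispatch the three atomic clauses. When $M = x$ we get $(\lambda^* x.x)N = IN = N = x[N/x]$; when $M = c$ is a constant we get $(Kc)N = c = c[N/x]$, using that $c$ is closed; and when $M = y$ is a variable distinct from $x$ we get $(Ky)N = y = y[N/x]$, using that $x \notin FV(y)$. Each reduces to a single application of the $I$- or $K$-equation together with the trivial behaviour of substitution on atoms.

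For the inductive step, write $M = PQ$ and split on the three clauses governing $\lambda^* x.(PQ)$. The key point is that the affine discipline forces $x$ to occur \emph{at most once} in $PQ$, so the cases $x \in FV(P)$, $x \in FV(Q)$, and $x \notin FV(P)\cup FV(Q)$ are mutually exclusive and exhaustive; this is precisely why $B$, $C$, $K$ suffice and no duplicating combinator is needed. If $x \in FV(P)$, then $\lambda^* x.(PQ) = C(\lambda^* x.P)Q$, and the $C$-equation gives $C(\lambda^* x.P)QN = ((\lambda^* x.P)N)Q$; applying the induction hypothesis to $P$ yields $(P[N/x])Q$, which equals $(PQ)[N/x]$ since $x \notin FV(Q)$. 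If $x \in FV(Q)$, then $\lambda^* x.(PQ) = BP(\lambda^* x.Q)$, and the $B$-equation gives $P((\lambda^* x.Q)N)$; the induction hypothesis on $Q$ turns this into $P(Q[N/x]) = (PQ)[N/x]$, using $x \notin FV(P)$. Finally, if $x$ occurs in neither, then $\lambda^* x.(PQ) = K(PQ)$ and the $K$-equation gives $(PQ)N = PQ = (PQ)[N/x]$.

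The argument is essentially routine, and no genuine obstacle arises: the matching of the $C$- and $B$-equations against the two possible placements of $x$ is exactly what drives the identity through. The only point demanding care is the bookkeeping enforced by affinity, namely checking at each application clause that the side conditions on $FV$ are consistent, so that $x \in FV(P)$ genuinely forces $x \notin FV(Q)$ and conversely; this is what lets $(PQ)[N/x]$ distribute in the one-sided way each combinator equation produces.
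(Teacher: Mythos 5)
Your proof is correct and follows exactly the route the paper takes: the paper's proof is simply ``by straightforward induction on the definition of $\lambda^*$'', which is precisely your structural induction on $M$ dispatching the clauses of Definition~\ref{op} against the combinator equations. Your added observation that affinity makes the three application clauses mutually exclusive (so each combinator equation's one-sided distribution of the substitution suffices) is the right bookkeeping point and is consistent with the paper's implicit reading of $\mathbf{CL^A}$ as containing only affine terms.
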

\begin{proof}By strightforward induction on the definition of $\lambda^*$.
\end{proof}

The notion of {\em strictly linear (affine)  combinatory algebra}, or {\bf BCI}-algebra ({\bf BCK}-algebra) is the restriction of the notion of {\em combinatory algebra} to
strictly linear (affine) combinatory logic:

\begin{definition}[Strictly Linear (Affine) Combinatory Algebra, {\bf BCI}-alge\-bra ({\bf BCK}- algebra)]\label{lca}
\hfill
\\ (i) A {\em strictly linear (affine) combinatory algebra, SLCA, (SACA)}  $\mathcal{ A}= (A, \cdot)$ is
an
applicative structure
$(A,
\cdot)$, and distinguished
elements
(combinators) $B, C,  I$ (and $K$ in the affine case) sati\-sfying the following
equations: for all $x,y,z \in A$,
\smallskip

\begin{tabular}{llll}
  $ Bxyz =_{\mathcal A} x(yz) $ \ \ \ \ \ \ & $  Ix =_{\mathcal A} x  $    \ \ \ \ \ \ &
$  Cxyz =_{\mathcal A} (xz)y   $ \ \ \ \   \ \ &  $   K x  y =_{\mathcal A} x  $
\end{tabular}
\\ (ii) For a  strictly linear (affine) combinatory algebra $\mathcal{ A}$, we define  $\p{\ }_{\mathcal{A}}: \mathbf{CL^A}\rightarrow \mathcal{A}$ as  the natural interpretation of closed terms of  $ \mathbf{CL^L}$  ($ \mathbf{CL^A}$) into $\mathcal{A}$.
\\ (iii) For a  strictly linear (affine) combinatory algebra $\mathcal{ A}$, we define the set of strictly linear (affine) combinatory terms $\mathcal{T}(\mathcal{ A})$ as the extension of $\mathbf{CL^L}$ ($\mathbf{CL^A}$)
with constants $c_a$ for $a \in { A}$.
\end{definition}

In what follows, when clear from the context, we will simply write $=$ in place of $=_{\mathcal A}$.

As we did earlier for the syntactic notions, we will discuss semantic notions only for the strictly affine case. If not stated explicitly, the corresponding notions/theorems  carry over straightforwardly, {\em mutatis mutandis}, to the strictly linear case.

First we introduce {\em strictly affine $\lambda$-algebras}. These were
originally introduced by D. Scott for  standard  $\lambda$-calculus as the appropriate notion of categorical model for the calculus, see Definition~5.2.2(i) of~\cite{Bar84}.

\begin{definition}[Strictly Affine $\lambda$-algebra]
A SACA  $\mathcal{ A}$  is
 a \emph{strictly affine $\lambda$-algebra} if, for all closed $M,N \in \mathcal{T}(\mathcal{ A})$,
\[ \vdash (M)_{\lambda^A} =_{\lambda^A} (N)_{\lambda^A}  \ \Longrightarrow\ \p{M}_{\mathcal{ A}}=\p{N}_{\mathcal{ A}}  \ , \]
 where $ =_{\lambda^A}$ denotes provable equivalence on $\lambda$-terms, and $\p{\ }_\mathcal{ A}$ denotes (by abuse of notation) the natural extension to terms in $\mathcal{T}(\mathcal{ A})$ of the interpretation
$\p{\ }_{\mathcal{ A}}: \mathbf{CL^A}\rightarrow \mathcal{A}$.
 \end{definition}

Given a {\bf BCIK}-algebra, there exists a smallest quotient giving rise to a (possibly trivial) strictly affine $\lambda$-algebra, namely:

\begin{definition}\label{sette}
Let ${\mathcal A}= (A, \cdot)$ be a SACA. For all $a,b\in A$, we define $a\equiv_{\mathcal A} b$ if and only if there exist {\em closed} $M, N \in   {\mathcal T}({\mathcal A})$ such that $a= \p{M}_{\mathcal A}$, $b=  \p{N}_{\mathcal A}$, and
 $(M)_{\lambda^A} =_{\lambda^A} (N)_{\lambda^A}$.
\end{definition}

We have:

\begin{proposition}\hfill
\\ (i) Not all SACA's are strictly affine $\lambda$-algebras.
\\ (ii) Let ${\mathcal A}= (A, \cdot)$ be a SACA. Then the quotient $ (A/\equiv_{\mathcal A}, \cdot_{\equiv_{\mathcal A}})$ is a strictly affine $\lambda$-algebra.
\\ (iii) Not all non-trivial SACA's can be quotiented to a non-trivial strictly affine $\lambda$-algebra.
\end{proposition}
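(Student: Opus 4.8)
The three parts are largely independent, so I would treat them in turn, concentrating effort on (ii) and (iii). For (i), the plan is to use the \emph{term model} of $\mathbf{CL^A}$, i.e. the free SACA $\mathcal F$ whose carrier is the closed $\mathbf{CL^A}$-terms modulo the four combinator equations, with $\p{M}_{\mathcal F}$ the class of $M$; here $\p{M}_{\mathcal F}=\p{N}_{\mathcal F}$ holds exactly when $M=N$ follows from those equations. I would then exhibit one pair separating $\beta$-equality of translations from combinatory equality, namely $M=BII$ and $N=I$. A direct computation gives $(BII)_{\lambda^A}=(\lambda xyz.\,x(yz))(\lambda a.a)(\lambda b.b)=_{\lambda^A}\lambda z.z=(I)_{\lambda^A}$, so the hypothesis of the $\lambda$-algebra implication is met; on the other hand the combinator axioms rewrite $BII$ only once a third argument is supplied, so $BII$ and $I$ are distinct $\mathbf{CL^A}$-normal forms and, by Church--Rosser, $\p{BII}_{\mathcal F}\neq\p{I}_{\mathcal F}$. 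Thus $\mathcal F$ is a SACA that is not a strictly affine $\lambda$-algebra; any instance where the $\xi$-rule genuinely fails would serve equally well.

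For (ii), I would first check that $\equiv_{\mathcal A}$ is a congruence on $(A,\cdot)$. Reflexivity is immediate since every $a\in A$ equals $\p{c_a}_{\mathcal A}$ and $(c_a)_{\lambda^A}=_{\lambda^A}(c_a)_{\lambda^A}$; symmetry is built into the definition; and compatibility with application is direct, because both $(\ )_{\lambda^A}$ and $\p{\ }_{\mathcal A}$ are homomorphisms for application while $=_{\lambda^A}$ is a congruence, so if $a\equiv a'$ is witnessed by $M_1,N_1$ and $b\equiv b'$ by $M_2,N_2$, then $M_1M_2,N_1N_2$ witness $a\cdot b\equiv a'\cdot b'$. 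Transitivity is the only delicate point, since equality of the interpretations of the two inner witnesses need not force equality of their translations; I would therefore close the relation under transitivity (which preserves the compatibility with application just established), obtaining a genuine congruence and a well-defined quotient applicative structure $\mathcal A'=(A/\!\equiv_{\mathcal A},\cdot_{\equiv})$ with surjective homomorphism $\pi\colon A\to A/\!\equiv$. That $\mathcal A'$ is again a SACA is clear, as the four combinator equations hold in $\mathcal A$ and descend along $\pi$.

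The substantive step is that $\mathcal A'$ is a $\lambda$-algebra. Fixing a section $s\colon A/\!\equiv\to A$ of $\pi$, I would lift any closed $P\in\mathcal T(\mathcal A')$ to $\hat P\in\mathcal T(\mathcal A)$ by replacing each constant $c_{[a]}$ with $c_{s[a]}$. Since $\pi$ is an applicative homomorphism preserving the combinators and $\pi(s[a])=[a]$, interpretation commutes with $\pi$, giving $\p{P}_{\mathcal A'}=\pi(\p{\hat P}_{\mathcal A})$ by induction on $P$. Moreover $(\hat P)_{\lambda^A}$ differs from $(P)_{\lambda^A}$ only by the injective renaming $c_{[a]}\mapsto c_{s[a]}$ of inert constants, under which provable equality is preserved; hence $(P)_{\lambda^A}=_{\lambda^A}(Q)_{\lambda^A}$ yields $(\hat P)_{\lambda^A}=_{\lambda^A}(\hat Q)_{\lambda^A}$. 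As $\hat P,\hat Q$ are closed terms of $\mathcal T(\mathcal A)$ with $\lambda$-equal translations, $\p{\hat P}_{\mathcal A}\equiv_{\mathcal A}\p{\hat Q}_{\mathcal A}$ by definition, so $\pi(\p{\hat P}_{\mathcal A})=\pi(\p{\hat Q}_{\mathcal A})$, i.e. $\p{P}_{\mathcal A'}=\p{Q}_{\mathcal A'}$. The main obstacle here is precisely this bookkeeping with the constants of the quotient --- choosing representatives coherently and verifying that interpretation factors through $\pi$ --- rather than any deep idea.

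For (iii), by (ii) it suffices to produce a \emph{non-trivial} SACA $\mathcal A$ whose $\lambda$-algebra reflection $\mathcal A/\!\equiv_{\mathcal A}$ is trivial. Because a SACA in which $I$ and $K$ coincide is trivial (from $K=I$ one gets $x=Kxy=Ixy=xy$, hence $I=Iy=y$ for all $y$), the goal reduces to arranging $I_{\mathcal A}\equiv_{\mathcal A}K_{\mathcal A}$ in a non-trivial $\mathcal A$. The natural plan is to exploit the gap between $\mathbf{CL^A}$-provability and $\lambda^A$-provability --- the very failure of the $\xi$-rule underlying (i) --- by choosing a family of combinatory equations consistent as a $\mathbf{CL^A}$-theory, so that its term model is a non-trivial SACA, but whose $\lambda^A$-closure is inconsistent, so that the reflection collapses; concretely one seeks equations forcing some term to be interpreted as one combinator while its translation is $\beta$-equal to another, thereby placing $I$ and $K$ in a single $\equiv_{\mathcal A}$-class. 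I expect this to be the hard part of the whole proposition: one must simultaneously establish non-triviality (consistency of the combinatory theory, ideally via an explicit model) and total collapse of the reflection (inconsistency of the affine $\lambda$-theory extended by the same equations), and the equations must be engineered so that the collapse genuinely requires $\xi$ and is not already forced at the purely combinatory level. Given the decidable, strongly normalising character of both theories, this verification is a natural target for the explicit computation with compiled combinators and involutions developed in the rest of the paper.
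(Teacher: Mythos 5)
Your proofs of (i) and (ii) are correct. Part (i) follows the same route as the paper (closed term model of strictly affine combinatory logic, plus a pair of distinct combinatory normal forms whose $\lambda$-translations are $\beta$-equal; the paper uses $CKK$ vs.\ $I$ where you use $BII$ vs.\ $I$ --- both work). In (ii) you are in fact more scrupulous than the paper, whose proof is two lines: you correctly observe that $\equiv_{\mathcal A}$ as given in Definition~\ref{sette} is not manifestly transitive (the interpretations of the inner witnesses coincide, but nothing forces their $\lambda$-translations to be convertible), so one must pass to the transitive closure, and your section-of-the-projection bookkeeping for the constants $c_{[a]}$ is exactly what the paper's ``follows from definitions'' hides.

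Part (iii) has a genuine gap: you correctly reduce the problem to exhibiting a non-trivial SACA in which $I\equiv_{\mathcal A}K$, but you stop at a search strategy and never produce a witness --- and the witness is the whole content of (iii). The missing idea is to use \emph{looping terms}. The paper takes the closed term model of \emph{full} combinatory logic (which is in particular a SACA, since it contains $B,C,I,K$) modulo the two extra equations $(SII)(SII)=I$ and $(S(BII)(BII))(S(BII)(BII))=K$. Each left-hand side weakly reduces back to itself, so its only critical pair with the $S$-axiom closes (the $S$-reduct rewrites back to the term and hence to $I$, respectively $K$), the extended system stays confluent, and $I\neq K$ there: the model is non-trivial. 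Yet under $\lambda$-translation $BII=_\beta I$, so both left-hand sides are $\beta$-convertible to $(\lambda x.xx)(\lambda x.xx)$; the two equations then witness $I\equiv_{\mathcal A}K$ in the sense of Definition~\ref{sette}, so every strictly affine $\lambda$-algebra quotient collapses. Note why your plan, as stated, is hard to complete: you propose to stay with equations over $\mathbf{CL^A}$ and to verify everything by computation in the decidable, strongly normalising affine world, but within the affine fragment all closed terms have normal forms, and equating a normal form to a combinator is typically \emph{not} conservative --- e.g.\ adding $CKK=K$ already yields $Kz=CKKz=(Kz)K=z$, which trivialises the model at the purely combinatory level. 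The self-reproducing terms that make the construction work require $S$-like duplication, unavailable inside $\mathbf{CL^A}$ but perfectly admissible inside a SACA, whose carrier need not consist of affine terms. For the same reason the closing appeal to the Erlang machinery cannot replace the confluence argument: the witness essentially involves terms without normal forms.
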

\begin{proof}
(i) A trivial example is the closed term model of strictly affine combinatory logic, \ie\ the quotient of closed terms under equality, \eg\ CKK $\neq$ I. A more subtle example is the algebra of partial involutions ${\cal P}$ discussed in Section \ref{qua}.
\\ (ii) $\equiv_{\mathcal A}$ is a congruence w.r.t. application, since $=_{\lambda}$ is a congruence. Then the thesis follows from definitions.
\\ (iii) Consider the closed term model of standard combinatory algebra induced by the equations $(SII)(SII) = I $ and
 $(S(BII)(BII))(S(BII)(BII))  = K $.
This is clearly a SACA.  $S$ is the standard combinator from Combinatory Logic, see \eg\ \cite{Bar84}. The lhs's are  terms reducing to themselves, and thus can be consistently (\emph{i.e.} without producing a trivial model) indipendently equated to whatever;  but they are equated to each other in any strictly affine $\lambda$-algebra. Hence any quotient of this term model to a strictly affine $\lambda$-algebra is trivial, because $I=K$. In the strictly linear case the argument has to be modified by taking the second equation to be \eg\ $(S(BII)(BII))(S(BII)(BII))  = B $.
\end{proof}

We give now the notion of {\em strictly affine combinatory model}. The corresponding one for standard $\lambda$-calculus was introduced by A. Meyer in his seminal paper \cite{Meyer}.

\begin{definition}[Strictly Affine Combinatory $\lambda$-model]
\label{due}
 A  SACA $\mathcal{ A} $ is a \emph{strictly affine combinatory $\lambda$-model}
 if there exists a \emph{selector} combinator $\epsilon$ such
 that, for all $x,y\in A$, $\epsilon xy = xy$ and $(\forall z.\ xz = yz)\Longrightarrow \epsilon x = \epsilon y$.
 \end{definition}

\begin{proposition}
 Not all  strictly affine $\lambda$-algebras are strictly  affine combinatory $\lambda$-models.
\end{proposition}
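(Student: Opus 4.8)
The statement asserts that some strictly affine $\lambda$-algebra carries no selector, \ie\ fails the requirement of Definition~\ref{due}. The organising observation is that the two notions are cut out by logically different conditions. Being a strictly affine $\lambda$-algebra is an \emph{equational} demand: for each pair of closed combinatory terms $M,N$ with $(M)_{\lambda^A}=_{\lambda^A}(N)_{\lambda^A}$ one imposes the single equation $M=_{\mathcal A}N$, so the strictly affine $\lambda$-algebras form a variety and are in particular closed under subalgebras. By contrast, Definition~\ref{due} demands the \emph{existence} of an $\epsilon$ obeying the Meyer--Scott implication, an $\exists\forall$-condition that is not preserved by passing to subalgebras. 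Hence a separating example is to be expected, and the real work is to make it concrete. First I would invoke the affine analogue of the Meyer--Scott characterisation (\cf\ \cite{Meyer} and Section~5.2 of \cite{Bar84}): a strictly affine $\lambda$-algebra $\mathcal A$ is a combinatory $\lambda$-model if and only if it satisfies weak extensionality for the \emph{canonical} combinator $\mathbf 1$, the interpretation of $\lambda xy.xy$, namely $(\forall z.\ xz=yz)\Rightarrow \mathbf 1 x=\mathbf 1 y$. This is the crucial step, since it reduces the task of ruling out \emph{every} selector to refuting this one implication.

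With that reduction in hand, the goal becomes: exhibit a strictly affine $\lambda$-algebra $\mathcal A$ together with elements $a,b$ that act identically inside $\mathcal A$, \ie\ $az=bz$ for all $z\in\mathcal A$, yet $\mathbf 1 a\neq \mathbf 1 b$. The mechanism I would use to arrange an \emph{internal} indistinguishability that is nonetheless \emph{externally} visible is to take $\mathcal A$ as a subalgebra $\mathcal A\le\mathcal M$ of an ambient strictly affine $\lambda$-algebra $\mathcal M$ in which $a$ and $b$ are separated by some argument $z_0\in\mathcal M\setminus\mathcal A$. Then $a$ and $b$ agree on all of $\mathcal A$ for lack of a separating test point inside $\mathcal A$, while $\mathbf 1 a\,z_0=az_0\neq bz_0=\mathbf 1 b\,z_0$ forces $\mathbf 1 a\neq \mathbf 1 b$ already in $\mathcal M$, hence in $\mathcal A$ (a substructure shares equality). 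Since $\mathcal A$ is a subalgebra of a $\lambda$-algebra it is itself a strictly affine $\lambda$-algebra by the variety property, whereas the pair $a,b$ refutes canonical weak extensionality; by the Meyer--Scott characterisation $\mathcal A$ is then not a combinatory $\lambda$-model.

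A more hands-on route parallels the strictly linear development of Section~\ref{qua}, where the algebra of partial involutions is \emph{itself} the witness to the analogous statement. Because the strictly affine involutions are not even a $\lambda$-algebra, they cannot be used directly here; instead I would manufacture the witness from a suitable SACA by the smallest-quotient construction of Definition~\ref{sette}, which we already observed yields a strictly affine $\lambda$-algebra, and then locate inside it the pair $a,b$ violating weak extensionality. In either route, the calculations to be discharged are routine: check that $a$ and $b$ genuinely coincide on every internal argument, and verify the separation $\mathbf 1 a\neq \mathbf 1 b$.

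The main obstacle is precisely the construction of the ambient algebra and the blind spot. One must resist the temptation to use a closed-term model: strict affinity forbids duplication, so the calculus is strongly normalising and closed normal forms are determined by their action on closed arguments; consequently the $\omega$-rule holds in the term model, which is therefore already weakly extensional and \emph{is} a combinatory $\lambda$-model. The witness must thus be a genuinely non-free (semantic) strictly affine $\lambda$-algebra, possessing elements not definable from the combinators and a subalgebra $\mathcal A$ that is closed under application yet avoids \emph{all} arguments separating $a$ from $b$. Engineering such a ``blind'' subalgebra --- closed under application and containing $B,C,I,K$ while still missing every separator $z_0$ --- and proving that it does so, is where the difficulty concentrates.
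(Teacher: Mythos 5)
Your proposal does not actually contain a proof: the statement is existential (``not all'' means ``there exists a strictly affine $\lambda$-algebra admitting no selector''), so a proof must either exhibit a concrete witness or cite a result that produces one, and yours does neither. The subalgebra scheme you describe (a pair $a,b$ agreeing on every argument inside $\mathcal A$ but separated by an external $z_0$) is a reasonable plan --- and subalgebras of strictly affine $\lambda$-algebras are indeed strictly affine $\lambda$-algebras, directly from the definition, without needing the variety argument --- but you then concede that engineering such a ``blind'' subalgebra ``is where the difficulty concentrates.'' That engineering is precisely the content of the proposition, so the proposal stops exactly where the proof has to begin; your second, ``hands-on'' route via the quotient of Definition~\ref{sette} is likewise left unexecuted. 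The paper, by contrast, supplies witnesses: for standard, and hence strictly affine, combinatory logic it invokes Plotkin's refutation of Barendregt's $\omega$-rule conjecture (\cite{Bar84}, Section 17.3-4), a term-model counterexample secured by citation; and for the strictly linear case Theorem~\ref{Nocomb}(i) exhibits the algebra of partial involutions ${\cal P}$ itself, where Lemma~\ref{nocomb} does the real work of refuting \emph{every} possible selector $E$, not just a canonical one.

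Two further problems. First, your key reduction --- ``$\mathcal A$ is a combinatory $\lambda$-model iff weak extensionality holds for the canonical $\mathbf{1}=\lambda xy.xy$'' --- is Meyer's theorem for the ordinary calculus; Definition~\ref{due} only demands existence of \emph{some} selector $\epsilon$, and the nontrivial direction (if any selector exists then $\mathbf 1$ is one) would have to be re-proved in the affine setting before you may refute a single implication instead of all of them. The paper sidesteps this entirely by arguing against arbitrary selectors. Second, the ``main obstacle'' paragraph rests on a false claim: the $\omega$-rule does \emph{not} hold in the closed term model of the pure strictly affine $\lambda$-calculus. Every closed strictly affine term $\beta$-reduces to an abstraction (affine terms are strongly normalising, and a closed normal form cannot be an application), hence $\lambda y.Zy =_{\beta} Z$ for every closed $Z$; consequently $\mathbf{1}=\lambda xy.xy$ and $I=\lambda x.x$ satisfy $\mathbf{1}Z =_{\beta} Z =_{\beta} IZ$ for all closed $Z$, although $\mathbf{1}\neq_{\beta} I$ by Church--Rosser. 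So the term-model route is not blocked for the reason you give, and indeed the paper's own argument runs through term models via Plotkin's theorem; what remains true, and what your intuition correctly anticipated, is that in the strictly linear case the paper's witness ${\cal P}$ is a genuinely semantic algebra rather than a term model.
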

\begin{proof} In the case of the standard combinatory logic, and hence strictly affine combinatory logic,  this is implied by the well known conjecture of Barendregt on the failure of the $\omega$-rule, finally disproved by G. Plotkin using {\em universal generators}, (see \cite{Bar84}, Section 17.3-4). Theorem~\ref{Nocomb} below provides such a counterexample for the strictly linear case, namely the algebra of partial involutions ${\cal P}$.
\end{proof}

Curry was the first to discover that $\lambda$-algebras have {\em purely equational definitions}. We give corresponding results for strictly linear  and affine combinatory logic, which, although natural, are probably original. The significance of the following theorem is that a finite number of equations involving combinators, $A^\beta $, are enough to ensure that the congruence on $\mathbf{CL^A}$-terms is closed under the $\xi_A$-rule, as a rule of proof,  namely if $\mathbf{CL^A}+ A^\beta \vdash M=N$ then $\vdash \p{\lambda^* x.M}_{\cal A} = \p{\lambda^* x.N}_{\cal A}$.

\begin{theorem}\label{th:eqforSALA}
A SACA  $\cal A$ satisfying the following  sets of equations is a strictly  affine $\lambda$-algebra:
\begin{itemize}
\item  \begin{tabular}{ll}
 $ B= \lambda^* x y z. x(yz) = \lambda^* x y z. Bxyz $ &
\\
 $  C = \lambda^* x y z.(xz)y = \lambda^* x y z.C x yz    $\ \ \ \   \ \ \ \ &
\\
$  I = \lambda^* x. x = \lambda^* x. I x $ \ \ \ \ \ \ \ \ &
\\
$   K = \lambda^* x  y .x = \lambda^* xy. K x  y $
 \end{tabular}

\item equations for $\lambda^* x. IP = \lambda^* x.P$ to hold:  $\lambda^* y. BIy = \lambda^* yz.yz$

\item equations for $\lambda^* x. BPQR=\lambda^* x. P(QR)$ to hold:
\begin{itemize}
\item $\lambda^* u v w. C(C(BB u) v) w = \lambda^* u v w. C u (v w)$

\item  $\lambda^* u v w. C(B(B u) v) w = \lambda^* u v w. B u (C v w)$

\item $\lambda^* u v w. B(B u v) w = \lambda^* u v w. B u (B v w)$
\end{itemize}
\item for $\lambda^* x. CPQR=\lambda^* x. PRQ$ to hold:
\begin{itemize}
\item $\lambda^* u v w. C(C(BC u) v) w =
\lambda^* u v w. C (C u w) v $
\item  $\lambda^* u v w. C(B(C u) v) w =
\lambda^* u v w. B (u w) v$
\item  $\lambda^* u v w. B(C u v) w = \lambda^* u v w. C (B u w) v$
\end{itemize}
\item for $\lambda^* x.KPQ = \lambda^* x.P$ to hold :
\begin{itemize}
\item $\lambda^* xy. C(BKx)y = \lambda^* xyz. xz$
\item $\lambda^* xy. B(Kx)y = \lambda^* xyz. x$
\end{itemize}
\item 2 more equations are necessary for $K$ in dealing with $\xi$ over axioms:
\begin{itemize}
\item $\lambda^* xy. Bx(Ky) = \lambda^* xy. K(xy) $
\item $\lambda^* xy. C(Kx)y = \lambda^* xy. K(xy) $
\end{itemize}
\end{itemize}
\end{theorem}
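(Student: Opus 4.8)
The plan is to reduce the statement to the classical Curry-style argument for combinatory $\lambda$-algebras (Section~5.2 of \cite{Bar84}), adapted to the affine setting. Write $T = \mathbf{CL^A} + A^\beta$ for the formal equational theory generated by the combinatory axioms together with the displayed equations $A^\beta$; by hypothesis $\mathcal A \models T$. I would first reduce the whole claim to two lemmas. The \emph{round-trip lemma}: $T \vdash P = ((P)_{\lambda^A})_{CL^A}$ for every combinatory $P$, proved by induction on $P$, with the combinator leaves discharged by the first block of equations ($B = \lambda^* xyz.\,x(yz)$, and its analogues for $C,I,K$), which identify each combinator with the combinatory translation of its intended $\lambda$-meaning. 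The \emph{transfer lemma}: for $\lambda$-terms $U,V$, $U =_{\lambda^A} V \Rightarrow T \vdash (U)_{CL^A} = (V)_{CL^A}$. Granting these and using $\mathcal A \models T$, the hypothesis $(M)_{\lambda^A} =_{\lambda^A} (N)_{\lambda^A}$ yields
\[ \p{M}_{\mathcal A} = \p{((M)_{\lambda^A})_{CL^A}}_{\mathcal A} = \p{((N)_{\lambda^A})_{CL^A}}_{\mathcal A} = \p{N}_{\mathcal A}, \]
the outer steps being the round-trip lemma and the middle step the transfer lemma, which is exactly the required conclusion.

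The heart is the transfer lemma, proved by induction on the derivation of $U =_{\lambda^A} V$. The congruence, reflexivity, symmetry and transitivity rules pass through routinely, and the $\beta_A$-rule is discharged by the Strictly Affine Abstraction Theorem (Theorem~\ref{absth}), valid in every SACA and hence in $T$. The only genuinely new content is the $\xi_A$-rule; translating abstraction to $\lambda^*$ (Definition~\ref{op}), this case reduces to the \emph{key lemma}: $T \vdash R = S \Rightarrow T \vdash \lambda^* x.R = \lambda^* x.S$. I would prove this by a second, inner induction on the derivation of $R = S$ in $T$.

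In that inner induction the $A^\beta$-axioms are closed equations, so abstracting a non-occurring $x$ only prefixes a $K$ and the step is immediate; reflexivity, symmetry and transitivity are again routine. Two families carry the weight. (i) The \emph{combinator-axiom} cases $IP = P$, $BPQR = P(QR)$, $CPQR = PRQ$, $KPQ = P$: by affinity $x$ occurs in at most one of $P,Q,R$, so unfolding $\lambda^* x$ on each side according to Definition~\ref{op} splits each axiom into the handful of sub-cases recorded in the theorem, and in every sub-case the required identity is precisely one of the displayed equations, used in pointwise form (obtained by applying Theorem~\ref{absth} to its fully abstracted statement). When $x$ occurs in none of $P,Q,R$ both sides reduce to a $K$-term and equality follows from the axiom itself. (ii) The \emph{application-congruence} case: from $R_1 = S_1$, $R_2 = S_2$ infer $R_1 R_2 = S_1 S_2$, unfolding $\lambda^* x$ of both products and feeding in the two induction hypotheses.

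I expect the main obstacle to live in this congruence case, and it is exactly what the last two $K$-equations are for. In the affine theory provable equality does not preserve the set of free variables, because $K$ discards an argument, so it can happen that $x \in FV(R_1)$ while $x \notin FV(S_1)$; then $\lambda^* x.(R_1 R_2)$ and $\lambda^* x.(S_1 S_2)$ are built from different clauses of Definition~\ref{op} (a $C$ or $B$ on one side, a $K$ on the vanishing side), and reconciling them forces exactly $\lambda^* xy.\,C(Kx)y = \lambda^* xy.\,K(xy)$ and $\lambda^* xy.\,Bx(Ky) = \lambda^* xy.\,K(xy)$. I would arrange the mismatched sub-cases so that the induction hypothesis collapses the vanishing side to a $K$-term and these two equations close the gap. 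The delicate part is thus not any single computation but the disciplined bookkeeping of free-variable occurrences through the $\xi_A$- and congruence steps; in the strictly linear specialisation there is no $K$, free variables are preserved, and these two extra equations become unnecessary, which explains why they appear only in the affine case.
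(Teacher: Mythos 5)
Your overall architecture is exactly that of the paper's (sketched) proof, which follows Curry's argument in \cite{Bar84}, Section 7.3: reduce the statement to closure of $\mathbf{CL^A}+A^\beta$ under the $\xi_A$-rule, handle the combinatory-axiom cases according to the branches of the Abstraction Operation (Definition~\ref{op}), recover pointwise instances of the displayed closed equations via Theorem~\ref{absth}, and use the last two $K$-equations for the mismatches created by $K$ erasing variables. In particular your diagnosis of the congruence case --- provable equality in the affine theory does not preserve free variables, so the two sides of $\lambda^* x.(R_1R_2)=\lambda^* x.(S_1S_2)$ may be built from different clauses of Definition~\ref{op}, and the pointwise instances $C(KN)Q=K(NQ)$ and $BN(KQ)=K(NQ)$ close the gap --- is the correct technical reading of the paper's terse remark that these equations deal with ``$\xi$ over axioms'', and it correctly predicts their disappearance in the strictly linear case.

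There is, however, one overstated step: the claim that in every axiom sub-case ``the required identity is precisely one of the displayed equations, used in pointwise form''. This fails for the $I$- and $K$-axiom cases in which $x$ occurs in the surviving argument $P$. Take $KPQ=P$ with $x\in FV(P)$: unfolding gives $\lambda^* x.(KPQ)=C(BK(\lambda^* x.P))Q$, which must be proved equal to $\lambda^* x.P$. But the pointwise form of the displayed equation $\lambda^* xy.C(BKx)y=\lambda^* xyz.xz$ is $C(BKA)Q=BAI$: applying $\lambda^* xyz.xz$ to $A$ and $Q$ and reducing with the combinatory axioms yields $BAI$, i.e.\ the unfolding of the $\eta$-expansion $\lambda^* z.Az$, not $A$ itself. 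So after instantiating $A:=\lambda^* x.P$ you still owe $B(\lambda^* x.P)I=\lambda^* x.P$, which is not an instance of any displayed equation. It does follow from them, but only by a further induction on $P$: the base cases are $BII=I$ (the first-block equation $I=\lambda^* x.Ix$) and $B(Ky)I=Ky$ (pointwise form of $\lambda^* xy.B(Kx)y=\lambda^* xyz.x$), and the application cases use the $B$-group and $C$-group equations together with the induction hypothesis. The same side lemma is needed for $IP=P$ with $x\in FV(P)$, whose displayed equation likewise only gives $BIA=BAI$ pointwise. Your plan is repairable --- insert this ``$\eta$-stability of $\lambda^*$-abstractions'' lemma before the inner induction --- but without it the $I$- and $K$-axiom cases of your key lemma do not close.
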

\begin{proof}(Sketch)
\noindent The proof follows closely the argument in \cite{Bar84}, Section 7.3. The equations allow for proving that $\mathbf{CL^A}$ is closed under the $\xi_A$-rule. For each combinator we have therefore as many equations as possible branches in the Abstraction Operation. At the very end, suitable $\lambda^*$-abstractions need to be carried out in order to remove the parameters.
\end{proof}

The corresponding theorem in the {\em strictly linear}
case is obtained by deleting all the equations referring to K.

\section{Strictly Linear  and Affine Type Discipline for the $\lambda$-calculus}\label{tre}

In this section we introduce the key type-theoretic tools for understanding the fine structure of partial involutions, namely {\em principal simple type schemes}. Principal types were introduced by Hindley see \eg\ \cite{Hindley69} but with a different purpose. We discuss the strictly linear and strictly affine cases separately, because they exhibit significantly different properties.

\begin{definition}[Simple Types]
$(\mathit{Type}\ni)\ \mu::= \alpha\ |\ \mu \rightarrow \mu\ $, where $\alpha\in \mathit{TVar}$  denotes a type variable.
\end{definition}

\begin{definition}[Strictly Linear   Type Discipline]\label{discipline} The \emph{strictly linear  type system} for the  $\lambda^L$-calculus is given by the following set of rules for assigning {\em simple  types} to terms of ${\mathbf \Lambda^L}$.
 Let $\Gamma, \Delta$ denote environments, \ie\ {\em sets} of the form $\Gamma = x_1: \mu_1, \ldots , x_m:\mu_m$, where each variable in  $\mathit{dom}(\Gamma) = \{ x_1, \ldots , x_m\}$ occurs exactly once:

\medskip

\reguno{}{x : \mu\vdash_L  x:\mu}  \ \ \
\reguno{x \in FV(M)\ \ \ \ \  \Gamma,  x:\mu \vdash_L M:\nu}{\Gamma\vdash_L \lambda x. M : \mu\rightarrow \nu}  \medskip

\reguno{\Gamma\vdash_L M: \mu \rightarrow \nu\ \ \ \ \Delta\vdash_L N:\mu\ \ \  (dom(\Gamma)\cap \mathit{dom}(\Delta))=\emptyset}{\Gamma, \Delta \vdash_L MN: \nu}.
\end{definition}

We introduce now the crucial notion of {\em principal type scheme}:

\begin{definition}[Principal Type Scheme]\label{principal}
Given a $\lambda^L$-term $M $, the judgement $\Gamma \Vdash_L M : \sigma$ denotes that $\sigma$ is the {\em principal type scheme} of $M$:\medskip

\begin{small}
\noindent
\reguno{}{x:\alpha\Vdash_L x:\alpha} \ \ \ \
\reguno{x \in FV(M)\ \ \ \ \  \Gamma, x:\mu\Vdash_L M:\nu}{\Gamma\Vdash_L \lambda x. M : \mu\rightarrow \nu}
\medskip

\noindent \hspace*{-0.7cm} \reguno{\begin{tabular}{c} $\Gamma\Vdash_L M: \mu \ \ \ \Delta\Vdash_L N:\tau \ \ \ (\mathit{dom}(\Gamma) \cap \mathit{dom}(\Delta))= \emptyset  \ \ \  (\mathit{TVar}( \Gamma)\cap \mathit{TVar}( \Delta))=\emptyset   $
\\    $    (\mathit{TVar}( \mu)\cap \mathit{TVar}( \tau))=\emptyset  \ \ \   U' = MGU(\mu, \alpha \rightarrow \beta) \ \ \  U= MGU(U'(\alpha), \tau)\ \ \  \alpha, \beta \ \mathit{fresh}$  \end{tabular} }{U(\Gamma, \Delta) \Vdash_L MN: U\circ U' (\beta)}
 \medskip\\
\end{small}

\noindent where $MGU$ gives the most general unifier, and it is defined (in a standard way) below. By abuse of notation, $U$ denotes also the substitution on contexts induced by $U$.
\end{definition}

\begin{definition}[$MGU(\sigma,\tau)$]
Given two types $\sigma$ and $\tau$, the partial algorithm $MGU$ yields a substitution  $U$ on type variables (the identity almost everywhere) such that $U(\sigma)=U(\tau)$: \medskip

\noindent \reguno{MGU(\alpha,\tau)=U \ \ \alpha\in TVar \ \ \tau\not\in TVar}{MGU(\tau,\alpha) = U}
\ \ \
\reguno{\alpha\in TVar\ \ \  \alpha\not\in \tau}{MGU(\alpha,\tau) = id[\tau/ \alpha]}
\medskip

\noindent \reguno{MGU(\sigma_1,\tau_1)=U_1\ \ \ MGU(U_1 (\sigma_2), U_1(\tau_2))=U_2}{MGU(\sigma_1\rightarrow \sigma_2, \tau_1 \rightarrow \tau_2)= U_2 \circ U_1}\\

\noindent where $ U_2 \circ U_1$ denotes composition between the extensions of the substitutions to the whole set of terms; $id$ denotes the identical substitution.
\end{definition}
As is well known, the above algorithm yields a substitution which factors any other unifier, see \eg\ \cite{Rob65}.

The following theorem, which can be proved by induction on derivations, connects the systems defined above.

\begin{theorem}\label{theorem_principal}
For all $M \in \Lambda^L$:
\\ (i) if $\Gamma \Vdash_L M:\sigma$ and  $\Gamma' \Vdash_L M:\sigma'$ are derivable, then $\sigma =_{\alpha} \sigma'$ and $\Gamma =_{\alpha} \Gamma'$, \emph{i.e.}
$\mathit{dom}(\Gamma) = \mathit{dom}(\Gamma')$ and $x:\mu \in \Gamma, x:\mu' \in \Gamma' \ \Rightarrow\ \ \mu=_{\alpha} \mu'$.
\\ (ii) if $\Gamma \Vdash_L M:\sigma$ is derivable, then each type variable occurs at most twice in $\Gamma \Vdash_L M:\sigma$.
\\ (iii) if $\Gamma \Vdash_L M:\sigma$, then, for all substitutions $U$ such that each type variable occurs at most twice in $U(\Gamma), U(\sigma)$,  $U(\Gamma) \vdash_L M:U(\sigma)$;
\\ (iv) if $\Gamma \vdash_L M:\sigma$, then there exists a derivation $\Gamma' \Vdash_L M: \sigma'$ and a type substitution $U$, such that $U(\Gamma')=\Gamma$ and $U(\sigma')=\sigma$.
\end{theorem}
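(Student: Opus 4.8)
The plan is to prove all four statements simultaneously by induction on the structure of $M$. Both the declarative system $\vdash_L$ and the principal-type system $\Vdash_L$ are syntax-directed --- for each of the three term formers (variable, abstraction, application) exactly one rule applies --- so induction on $M$ coincides with induction on derivations, and at each step the last rule can be inverted without ambiguity. Throughout I rely on two facts about $MGU$: that, when defined, it factors every unifier (\ie\ if $V(\sigma)=V(\tau)$ then $V = W\circ MGU(\sigma,\tau)$ for some $W$), as recalled above after its definition; and, as an immediate consequence, that $MGU(\sigma,\tau)$ is unique up to renaming of type variables. In every part the variable and abstraction cases are immediate, and the application rule --- which invokes the two unifications $U' = MGU(\mu,\alpha\rightarrow\beta)$ and $U = MGU(U'(\alpha),\tau)$ --- is the only case requiring work.

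I would establish (ii) first, as it is the genuinely linear ingredient and is consumed as a side condition in (iii). The invariant ``each type variable occurs at most twice in the full judgement $\Gamma\Vdash_L M:\sigma$'' holds in the base case, where $x:\alpha\Vdash_L x:\alpha$ has $\alpha$ occurring exactly twice, and is preserved by abstraction, where passing from $\Gamma, x:\mu\Vdash_L M:\nu$ to $\Gamma\Vdash_L\lambda x.M:\mu\rightarrow\nu$ merely relocates the occurrences of $\mu$ out of the deleted hypothesis and into the result type, leaving every count unchanged. The crux is the application case, which I expect to be the main obstacle. Here the premises $\Gamma\Vdash_L M:\mu$ and $\Delta\Vdash_L N:\tau$ have, by the side conditions, pairwise disjoint sets of type variables, each satisfies the bound by induction, and the fresh $\alpha,\beta$ occur once each in $\alpha\rightarrow\beta$. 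One then tracks occurrences through the two substitutions: because $\alpha,\beta$ are fresh and $TVar(\mu)\cap TVar(\tau)=\emptyset$, the first unification can only identify (a fragment of) $\mu$ with the new arrow and the second can only identify the operator's domain with $\tau$; in each identification a single occurrence from the operator side is glued to a single occurrence from the operand side (or to the fresh result variable $\beta$), so no variable can acquire a third occurrence. This ``one endpoint from each side'' bookkeeping --- morally the composition of two linear proof-nets along matching axiom links --- is precisely where strict linearity is essential, and is the delicate point of the whole theorem.

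Uniqueness (i) then follows by the same syntax-directed induction: the variable rule fixes the subject type to a single, arbitrarily chosen type variable, hence unique up to renaming; the abstraction case reads off $\mu\rightarrow\nu$ deterministically from the premise, which is unique up to renaming by the induction hypothesis; and in the application case the premise derivations are determined up to renaming by induction, the freshness convention makes the choice of $\alpha,\beta$ immaterial, and $U', U$ are then determined up to renaming by uniqueness of $MGU$. Hence $\sigma =_{\alpha}\sigma'$ and $\Gamma =_{\alpha}\Gamma'$.

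Finally (iii) and (iv) are the soundness and principality halves of the correspondence between $\Vdash_L$ and $\vdash_L$, proved by the classical Hindley--Curry argument. For (iii), given $\Gamma\Vdash_L M:\sigma$ and a substitution (call it $V$, to avoid clashing with $U$) for which $V(\Gamma), V(\sigma)$ still meet the bound of (ii), I induct on $M$: in the application case I feed the composite $V\circ U\circ U'$ to both premise derivations, using that $U'$ unifies $\mu$ with $\alpha\rightarrow\beta$ and $U$ unifies $U'(\alpha)$ with $\tau$ so that the instantiated operator type is an arrow whose domain is exactly the instantiated operand type; the at-most-twice hypothesis on the conclusion restricts to the two sub-judgements, the induction hypothesis yields $\vdash_L$-derivations, and these recombine through the application rule of $\vdash_L$, whose proviso holds since $\mathit{dom}(\Gamma)\cap\mathit{dom}(\Delta)=\emptyset$. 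For (iv), given $\Gamma\vdash_L M:\sigma$, the same induction produces a principal derivation with its witnessing substitution: in the application case I rename the premise derivations apart, combine the two substitutions supplied by the induction hypotheses into a single $W$ that unifies the operator type with $\alpha\rightarrow\beta$ and its domain with the operand type, and apply the factoring property of $MGU$ twice to write $W = W_2\circ U\circ U'$; then $W_2$ carries the principal conclusion to $\Gamma\vdash_L M:\sigma$. Existence of a principal derivation whenever $M$ is $\vdash_L$-typable is part of this induction, since the declarative typing guarantees a unifier --- and hence $MGU$ --- exists at each application node.
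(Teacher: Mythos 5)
Your overall strategy (simultaneous induction on $M$, legitimate because both $\Vdash_L$ and $\vdash_L$ are syntax-directed) matches the paper's own indication --- the paper gives no detailed proof, saying only that the theorem ``can be proved by induction on derivations'' --- and your treatment of (i), (ii) and (iv) is a sound elaboration of that hint; in particular you correctly isolate the unification bookkeeping in the application case as the heart of (ii), and your factoring argument for (iv) (rename the premises apart, assemble the combined unifier, factor it through the two calls to $MGU$) is the standard correct one. However, your proof of (iii) contains a genuine gap: the claim that ``the at-most-twice hypothesis on the conclusion restricts to the two sub-judgements'' is false, so the restricted induction hypothesis cannot be invoked on the premises. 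Concretely, take $M \equiv (\lambda x.x)(\lambda y.y)$. The premise derivations are $\Vdash_L \lambda x.x : \gamma\rightarrow\gamma$ and $\Vdash_L \lambda y.y : \delta\rightarrow\delta$, and the application rule yields $\Vdash_L (\lambda x.x)(\lambda y.y) : \delta\rightarrow\delta$, with $U\circ U'$ sending $\gamma$ to $\delta\rightarrow\delta$. Now take $V = id$, which keeps the conclusion judgement within the at-most-twice bound ($\delta$ occurs exactly twice). The composite $V\circ U\circ U'$ that you feed to the operator premise instantiates it to $\lambda x.x : (\delta\rightarrow\delta)\rightarrow(\delta\rightarrow\delta)$, in which $\delta$ occurs \emph{four} times; the side condition of your induction hypothesis for (iii) fails, and the induction halts. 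The phenomenon is general: the operand's principal type $\tau$ is consumed by the unification and need not survive into the conclusion, so no occurrence bound on the conclusion instance controls occurrences in the premise instances.

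The repair is easy but changes the shape of the argument: prove the \emph{unrestricted} statement that $\Gamma \Vdash_L M:\sigma$ implies $U(\Gamma) \vdash_L M : U(\sigma)$ for every substitution $U$. This holds because none of the side conditions of $\vdash_L$ mention type variables, so $\vdash_L$ is closed under arbitrary type substitution (immediate induction on derivations); combined with the fact that the principal judgement itself is declaratively derivable (induction on $M$, using substitution closure with $U\circ U'$ in the application case, where $U\circ U'(\mu) = U(\tau)\rightarrow U\circ U'(\beta)$ and $U\circ U'(\tau)=U(\tau)$ let the declarative application rule fire), this gives (iii) in full strength, and the version with the at-most-twice restriction stated in the theorem follows trivially. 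With that replacement your proof goes through; it would also be worth noting that the gluing argument you sketch for (ii), while correct in spirit, is exactly the unification-preserves-linearity lemma that needs a self-contained induction on the unification algorithm to be complete.
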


Here are some well known examples of principal types:

\begin{tabular}{lll}
$I$   &  $\lambda x.x$     &   $\alpha \rightarrow \alpha$\\
$B$   &  $\lambda xyz.x(yz)$     &   $(\alpha \rightarrow \gamma) \rightarrow (\beta  \rightarrow \alpha ) \rightarrow \beta \rightarrow \gamma$\\
$C$   &  $\lambda xyz.xzy$     &   $(\alpha \rightarrow \beta \rightarrow \gamma)  \rightarrow \beta \rightarrow \alpha \rightarrow \gamma$\\
\end{tabular}

\begin{theorem}[Strictly Linear Subject Conversion]\label{subjectreduction}
Let $M \in \Lambda^L$, $M =_{\beta^L} M'$, and  $\Gamma \Vdash_L M:\sigma$, then  $\Gamma \Vdash_L M':\sigma$.
\end{theorem}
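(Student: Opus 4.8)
The plan is to localize all the work in a substitution lemma for principal types, exploiting strict linearity, which makes $\beta_L$-reduction type-invariant \emph{on the nose} rather than merely up to instantiation. Since $=_{\beta^L}$ is the reflexive--symmetric--transitive (and congruence) closure of one-step $\beta_L$-reduction, it suffices, by induction on the length of a conversion sequence, to show that one contraction step preserves the principal type in both directions: if $M \rightarrow_{\beta^L} M'$, then $\Gamma \Vdash_L M:\sigma$ holds if and only if $\Gamma \Vdash_L M':\sigma$ holds, with the same $\Gamma,\sigma$ up to $=_{\alpha}$. Because one-step reduction is the contextual closure of the basic contraction $(\lambda x.M)N \rightarrow_{\beta^L} M[N/x]$, I would first settle this redex and then propagate through arbitrary contexts.

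The core is a Substitution Lemma: if $\Gamma, x:\mu \Vdash_L M:\nu$ and $\Delta \Vdash_L N:\tau$ are principal derivations with disjoint type variables and $\mathit{dom}(\Gamma)\cap\mathit{dom}(\Delta)=\emptyset$, then, writing $U = MGU(\mu,\tau)$, one has $U(\Gamma,\Delta) \Vdash_L M[N/x] : U(\nu)$, and this is the principal type of $M[N/x]$. The decisive point is strict linearity: since $\lambda x.M$ is well formed, $x$ occurs free in $M$ exactly once, so the principal derivation of $M$ uses $x$ at a single axiom leaf, typed $\mu$ in the resulting judgment, and by Theorem~\ref{theorem_principal}(ii) each type variable occurs at most twice throughout. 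Replacing that single leaf by the (renamed, variable-disjoint) principal derivation of $N$ yields a derivation for $M[N/x]$ whose only new constraint is that the position formerly typed $\mu$ must now accept $\tau$; solving it with $U=MGU(\mu,\tau)$ and propagating $U$ gives the displayed judgment, its principality following because $U$ is most general. I would prove this by induction on $M$, the abstraction and application cases reducing to the commutation of $MGU$ with substitution.

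For the redex, the principal type of $\lambda x.M$ is $\mu\rightarrow\nu$ from $\Gamma, x:\mu \Vdash_L M:\nu$, and the application rule of Definition~\ref{principal} computes the principal type of $(\lambda x.M)N$ via $U' = MGU(\mu\rightarrow\nu,\alpha\rightarrow\beta)$ and $U'' = MGU(U'(\alpha),\tau)$ with $\alpha,\beta$ fresh; thus $U'$ merely renames $\mu,\nu$ to $\alpha,\beta$ and $U''$ coincides with $MGU(\mu,\tau)$ up to this renaming, so the rule returns exactly $U(\Gamma,\Delta)\Vdash_L (\lambda x.M)N : U(\nu)$, identical by Theorem~\ref{theorem_principal}(i) to the principal type of $M[N/x]$ supplied by the Substitution Lemma. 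Hence redex and contractum share a principal type, yielding the biconditional for the basic step. Contextual closure is then an induction on the one-hole context: the principal-type rules are syntax-directed, so replacing an immediate subterm by one with the same principal type leaves the computed principal type of the whole term unchanged; for the abstraction context I additionally use that $\beta_L$-reduction preserves the set of free variables (the contracted $x$ occurs exactly once and the free variables of $N$ reappear in $M[N/x]$), so the side condition $x\in FV(\cdot)$ is respected on both sides.

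The main obstacle is the unification bookkeeping inside the Substitution Lemma: I must verify that the single $MGU(\mu,\tau)$ computed at the substitution site genuinely reproduces the substitution an independent principal-type computation of $M[N/x]$ would generate, and that freshness and variable-disjointness are maintained so that $MGU$ is always invoked on separated variable pools. This is precisely where linearity pays off: a single occurrence of $x$ forces a single unification constraint and, by Theorem~\ref{theorem_principal}(ii), a tightly controlled sharing of type variables, so the unifier is \emph{determined} rather than the outcome of merging several possibly incompatible demands, as would occur in the non-linear calculus. The remaining ingredients---the induction on $M$, the syntax-directedness of the rules, and the invariance of free-variable sets---are routine.
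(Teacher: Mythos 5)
Your proposal is correct, but it takes a genuinely different route from the paper. You work entirely inside the principal-type system $\Vdash_L$: you prove a substitution lemma stating that the principal typing of $M[N/x]$ is obtained from those of $M$ and $N$ by the single unifier $U=MGU(\mu,\tau)$, observe that the application rule of Definition~\ref{principal} computes exactly the same data for the redex $(\lambda x.M)N$ (unifying with the fresh $\alpha\rightarrow\beta$ being a mere renaming), and then close under one-hole contexts using syntax-directedness; strict linearity guarantees both that the single occurrence of $x$ generates a single unification constraint and that reduction preserves free variables, so each reduction step preserves the principal typing on the nose, in both directions. The paper never touches unifiers: it first proves subject conversion for the plain assignment system $\vdash_L$, via the standard substitution equivalence $\Gamma,x{:}\mu \vdash_L M{:}\nu \ \wedge\ \Delta \vdash_L N{:}\mu \Longleftrightarrow \Gamma,\Delta\vdash_L M[N/x]{:}\nu\ \wedge\ \Delta\vdash_L N{:}\mu$, and then transfers the result to $\Vdash_L$ by a ping-pong argument through Theorem~\ref{theorem_principal}: soundness (iii) and principality (iv), applied in both directions of the conversion, show that $\sigma$ is an instance of the principal type $\sigma'$ of $M'$ and that $\sigma'$ is an instance of the principal type of $M$, and uniqueness (i) then forces $\Gamma=_\alpha\Gamma'$ and $\sigma=_\alpha\sigma'$. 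Your approach buys more information---it exhibits the principal typing as compositionally invariant under reduction, which is essentially the engine behind Theorem~\ref{ven}---but its cost is exactly the unification bookkeeping you flag as the main obstacle (commutation of $MGU$ with substitution, order-independence of solving the constraint set), which is the genuinely delicate part of your plan and which the paper's instance-antisymmetry argument sidesteps entirely by delegating all the work to the already-established Theorem~\ref{theorem_principal}.
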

\begin{proof}
First one proves
 subject conversion for $\vdash_L$, \ie:\\ $\Gamma \vdash_L M:\sigma \ \wedge\  M =_{\beta^L} M' \ \Longrightarrow\ \Gamma \vdash_L M':\sigma$.
This latter fact follows from:
\\ $\Gamma, x:\mu \vdash_L M:\nu \ \wedge\ \Delta \vdash_L N: \mu \ \Longleftrightarrow \ \Gamma, \Delta \vdash_L M[N/x]: \nu \ \wedge\ \Delta \vdash_L N: \mu$\ ,
\\ which can be easily proved by induction on $M$.
\\ Now, let $M =_{\beta^L} M'$ and $\Gamma \Vdash_L M:\sigma$. Then, by Theorem~\ref{theorem_principal}(iii), $\Gamma \vdash_L M:\sigma$, and by subject conversion
of $\vdash_L$, $\Gamma \vdash_L M':\sigma$. Hence, by Theorem~\ref{theorem_principal}(iv), there exist $U, \Gamma', \sigma'$ such that $\Gamma' \Vdash_L M':\sigma'$ and
 $U(\Gamma')= \Gamma$, $U(\sigma')= \sigma$. But then, by Theorem~\ref{theorem_principal}(iii),  $\Gamma' \vdash_L M':\sigma'$, and by subject conversion of $\vdash_L$,
 $\Gamma' \vdash_L M:\sigma'$. Hence, by Theorem~\ref{theorem_principal}(iv), there exist $U', \Gamma'', \sigma''$ such that  $\Gamma'' \Vdash_L M:\sigma''$ and
 $U'(\Gamma'')=\Gamma'$, $U'(\sigma'') = \sigma'$. Finally, by Theorem~\ref{theorem_principal}(i), $\Gamma =_{\alpha} \Gamma''$ and $\sigma =_{\alpha} \sigma''$, and hence also
 $\Gamma =_{\alpha} \Gamma'$ and $\sigma =_{\alpha} \sigma'$.
\end{proof}

\subsection{The Strictly Affine Case: Discussion}\label{noaffine}

The extension to the $\lambda^A$-calculus of Definition \ref{discipline} is  apparently unproblematic. We can just add  to $\vdash$ the natural rule
\noindent \reguno{\Gamma  \vdash_A M : \nu \ \ x \not\in \mathit{dom}(\Gamma) \ \ \  TVar( \mu) \ \mbox{fresh}}{\Gamma \vdash_A \lambda x.M : \mu \rightarrow \nu}\smallskip\\
and its counterpart to $\Vdash$, namely \ \ \
\noindent \reguno{\Gamma  \Vdash_A M : \nu \ \  x \not\in \mathit{dom}(\Gamma) \ \ \  \alpha \ \mbox{fresh}}{\Gamma \Vdash_A \lambda x.M : \alpha \rightarrow \nu}.\smallskip \\
However, while we get type assignment systems which satisfy the extension  of Theorem~\ref{theorem_principal} to the affine case, the affine version of Theorem \ref{subjectreduction}, \ie\ subject conversion, fails. Namely, we cannot derive $  \Vdash_A \lambda x y z. (\lambda w. x)(yz)  : \alpha_1 \rightarrow \alpha_2 \rightarrow \alpha_3 \rightarrow \alpha_1$, but only $ \Vdash_A \lambda x y z. (\lambda w. x)(yz): \alpha_1 \rightarrow (\alpha_2 \rightarrow \alpha_3) \rightarrow \alpha_2 \rightarrow \alpha_1$, which is an instance of the former.
But we have $\vdash \lambda x y z.x: \alpha_1 \rightarrow \alpha_2 \rightarrow \alpha_3 \rightarrow \alpha_1$.
This is the key reason for the failure of the Affine Combinatory Algebra $\cal P$ defined  in Section \ref{qua} to be a strictly affine  $\lambda$-algebra.

\section{Abramsky's Model of Reversible Computation} \label{qua}

 S. Abramsky, in \cite{Abr05}, impressively exploits  the connection between {\em automata} and strategies and introduces various  reversible universal models of computation. Building on earlier work, \eg\ \cite{AL05,Hagh00}, S. Abramsky defines  models  arising from  {\em Geometry of Interaction} ({\em GoI}) {\em situations},
consisting of {\em history-free strategies}. He discusses $\mathcal{I}$, the model of {\em partial injections} and $\mathcal{P}$, its substructure consisting of {\em partial involutions}. In particular, S. Abramsky introduces notions of {\em reversible pattern-matching} ({\em bi-orthogonal}) {\em automata} as concrete devices for implementing such strategies. In this paper, we focus on the model  ${\cal P}$ of partial involutions, similar results can be obtained for the one of partial injections.

The model of partial involutions yields an {\em affine combinatory algebra}. This notion extends that of {\em strictly affine combinatory algebra}, introduced in Definition \ref{lca}, with a $!$ operation and extra combinators:

\begin{definition}[Affine Combinatory Algebra, \cite{Abr05}] \label{aca}
 An {\em affine combinatory algebra} ({\em ACA}),  $\mathcal{ A}= (A, \cdot, \bang)$ is
an applicative structure $(A, \cdot)$ with a unary (injective) operation $\bang $, and combinators $B, C,  I,K,  W, D, \delta,  F$ sati\-sfying the following
equations: for all $x,y,z \in A$,
\smallskip

\begin{tabular}{llll}
  $ Bxyz = x(yz) $ \ \ \ \ \ \ & $  Ix = x  $    \ \ \ \ \ \ &
$  Cxyz = (xz)y   $ \ \ \ \   \ \ &  $   K x  y = x  $
\\
$  Wx\bang y = x\bang y\bang y   $   \ \ \ \ \ \ &  $   \delta \bang x = \bang \bang x   $
 \ \ \  \ \ \ &         $  D!x = x  $    \ \ \  \ \ \   &  $  F\bang x\bang y = \bang (xy)  $.
\end{tabular}\smallskip
\end{definition}

Affine combinatory algebras are models of {\em affine combinatory logic}, which extends {\em strictly affine combinatory logic}, introduced in Definition \ref{sacl}, with !-operator and combinators $W$, $\delta$, $D$, and $F$.

Partial involutions are defined over a suitable language of moves, and they can be endowed with an ACA structure:

\begin{definition}[The Model of Partial Involutions $\cal P$]\label{cdot}\hfill
\\ (i)  $T_\Sigma$, the language   of {\em moves},  is defined by the signature $\Sigma_0$= \{$e$\}, $\Sigma_1$ = \{l,r\}, $\Sigma_2\ = \{<\ ,\ > \}$; terms  $r(x)$ are {\em output words}, while terms $l(x)$ are {\em input words} (often denoted simply by $rx$ and $lx$);\\
(ii) ${\cal P}$ is the set of {\em partial involutions} over $T_\Sigma$, \ie\ the set of all partial injective functions $ f: T_\Sigma \rightharpoonup T_\Sigma $ such that $f(u)=v \Leftrightarrow f(v)=u$; \\
(iii) the operation of  {\em replication} is defined by $! f \ =\ \{ (<t,u>,<t,v>)\ |\  t \in  T_\Sigma\ \wedge (u,v)\in f\}$;\\
(iv) the notion of {\em linear application} is defined by $f\cdot g =\ f_{rr}\cup (f_{rl};g;(f_{ll};g)^*;f_{lr})$,  where $f_{ij} \ =\ \{ (u,v)| (i(u),j(v))\in f\}$, for $i,j \in \{r,l\}$ (see Fig.~\ref{fig:lappdiag}), where ``$;$'' denotes postfix composition.
\end{definition}

\begin{figure}[!h]
\[\xymatrix{
\mathsf{in}\ar[r] & \bullet\ar[r]^{f_{rr}}\ar[d]_{f_{rl}} & \bullet\ar[r] & \mathsf{out}\\
                  & \bullet\ar@<.5ex>[r]^{g} & \bullet\ar@<.5ex>[l]^{f_{ll}}\ar[u]_{f_{lr}} &
}
\]
\caption{Flow of control in executing $f\cdot g$.}\label{fig:lappdiag}
\end{figure}
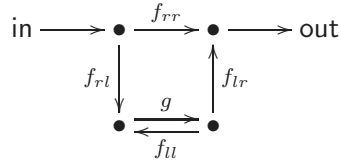

Following \cite{Abr05}, we make a slight abuse of notation and assume that $T_{\Sigma}$ contains  pattern variables for terms. The intended meaning will be clear from the context.
In the sequel, we will use the notation $u_1 \leftrightarrow v_1, \ldots , u_n \leftrightarrow v_n$, for $u_1, \ldots, u_n, v_1, \ldots , v_n \in T_{\Sigma}$, to denote the graph of the (finite)
partial involution $f$ defined by $\forall i. (f(u_i) = v_i \ \wedge \ f(v_i) = u_i)$. Again, following \cite{Abr05}, we will use the above notation in place of a more automata-like presentation of the partial involution.
\smallskip

\begin{proposition}[\cite{Abr05}, Th.5.1]
${\cal P}$ can be endowed with the structure of an {\em affine combinatory algebra}, $({\cal P}, \cdot, !)$, where combinators are defined by the following partial involutions:
\\
\begin{tabular}{lll}
$B$ & : &   $ r^3x \leftrightarrow lrx \ , \  l^2 x\leftrightarrow rlrx \ , \   rl^2 x \leftrightarrow r^2 lx $\\
$C$ & : &   $ l^2x \leftrightarrow r^2 lx \ , \  lrl x\leftrightarrow rlx \ , \  lr^2 x \leftrightarrow r^3 x $\\
$F$ & : &
$l  \langle  x, ry\rangle \leftrightarrow r^2   \langle  x,  y \rangle  \ , \  l  \langle  x, ly\rangle \leftrightarrow rl   \langle  x,  y \rangle$\\
$W$ & : &
$ r^2x \leftrightarrow lr^2 x \ , \  l^2  \langle x, y\rangle \leftrightarrow rl \langle lx, y\rangle  \ , \   lrl  \langle x, y\rangle \leftrightarrow rl \langle rx, y\rangle$\\
$I$ & : &  $lx \leftrightarrow rx$\\
$K$ & : &  $lx \leftrightarrow r^2x$\\
$\delta$ & : & $l  \langle \langle  x, y\rangle, z \rangle \leftrightarrow r   \langle  x, \langle y, z \rangle \rangle$\\
$D$ & : & $l  \langle e, x\rangle \leftrightarrow r x $.
\end{tabular}
\end{proposition}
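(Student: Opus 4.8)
The plan is to verify directly that the eight partial involutions listed satisfy, under the linear application $\cdot$ and replication $\bang$ of Definition~\ref{cdot}, each of the defining equations of an affine combinatory algebra (Definition~\ref{aca}). Since an ACA is specified by a finite list of equations, it suffices to check them one at a time, verifying alongside that $f\cdot g$ and $\bang f$ are again partial involutions so that the structure is well defined. The principal-type/involution duality developed in Section~\ref{tre} serves as a guide throughout: each combinator's involution is exactly the one read off from its principal type, so the target of every equation is known in advance and the computation is a confirmation rather than a search.

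The basic computational step is to decompose a partial involution $f$ into its four components $f_{rr}, f_{rl}, f_{ll}, f_{lr}$ and evaluate $f\cdot g = f_{rr}\cup(f_{rl}; g; (f_{ll};g)^*; f_{lr})$, treating the operand $g$ symbolically. I would begin with the strictly linear core $I, B, C$ and then $K$. For $I$, whose graph is $lx\leftrightarrow rx$, one finds $I_{rl}=I_{lr}=\mathrm{id}$ and $I_{rr}=I_{ll}=\emptyset$, so the star collapses to $\mathrm{id}$ and $I\cdot g = g$, giving $Ix=x$. For $K$ (graph $lx\leftrightarrow r^2x$) the component $K_{rl}$ strips a leading $r$ and $K_{lr}$ restores it, so $K\cdot x$ is the involution $r t\leftrightarrow r\,(x\,t)$; a second application then has only an $rr$-component, yielding $(K\cdot x)\cdot y = x$ independently of $y$, i.e.\ $Kxy=x$. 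The cases $B$ and $C$ are analogous but require three successive applications, so the bookkeeping of prefixes through the nested feedback loops is heavier; the equations $Bxyz=x(yz)$ and $Cxyz=(xz)y$ fall out once the prefix rewriting induced by the listed pairs is tracked.

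For the genuinely affine combinators I would first fix the behaviour of replication, $\bang f=\{(\langle t,u\rangle,\langle t,v\rangle)\mid t\in T_\Sigma,\ (u,v)\in f\}$, and of the pairing moves $\langle\,,\rangle$. The equations $D\bang x=x$ and $\delta\bang x=\bang\bang x$ are direct rewritings of pair structure, while $F\bang x\bang y=\bang(xy)$ and $Wx\bang y=x\bang y\bang y$ require threading the execution formula through the replicated copies. Here the duality with types is most useful for predicting the outcome and organising the case split.

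The main obstacle is the role of the Kleene star $(f_{ll};g)^*$, i.e.\ the symmetric-feedback loop. In the $I$ and $K$ cases it is trivial, but for $B$, $C$, and above all for $W$ --- where the operand is consulted more than once --- the loop genuinely iterates, and one must show that the resulting infinite union of composites collapses to exactly the intended finite pattern and that the involution property is preserved at each stage. I expect the $W$ and $F$ verifications, where replication and nontrivial feedback interact, to be the delicate part; the Erlang implementation of $\cdot$ and $\bang$ described in Section~\ref{erlang} provides an independent mechanical check of these equations.
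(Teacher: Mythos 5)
Your plan is sound and your sample computations check out (for $I$, indeed $I_{rl}=I_{lr}=\mathrm{id}$, $I_{rr}=I_{ll}=\emptyset$, so $I\cdot g=g$; for $K$, the two-stage computation giving $(K\cdot x)\cdot y=x$ is exactly right). Be aware, however, that the paper itself does not prove this statement at all: it is recalled, with citation, from Abramsky (\cite{Abr05}, Th.~5.1), which is why it appears as a proposition with no proof. So the real comparison is with Abramsky's argument, and there the route is genuinely different from yours: the affine combinatory algebra structure is not established by equation-by-equation checking but is inherited from the general Geometry-of-Interaction framework --- partial injections form a traced symmetric monoidal category supporting a GoI situation, and any such situation yields a linear/affine combinatory algebra --- so closure of $\mathcal{P}$ under $\cdot$ and $!$, preservation of the involution property, and the combinatory equations all follow uniformly from the categorical machinery. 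Your route is the elementary, self-contained one: symbolic evaluation of the Execution Formula of Definition~\ref{cdot} against arbitrary operands, and the points you single out as delicate (symbolic treatment of the operand, the collapse of the Kleene star $(f_{ll};g)^{*}$, the interaction of replication with feedback in $W$ and $F$, and well-definedness of $f\cdot g$ and $!f$) are precisely the right ones; none of them fails, though the bookkeeping for $B$, $C$, $W$ is heavy. It is worth noting that the paper's own machinery supports exactly your strategy: Theorem~\ref{ven} lets each equation be checked by unification of principal types instead of chasing the feedback loop, and the Erlang implementation of Section~\ref{erlang} (which the authors use to verify the equations of Theorem~\ref{th:eqforSALA}) mechanizes the very computations you describe, so your proposal is best seen as the computational counterpart of the cited categorical proof.
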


In Section~\ref{qua} we focus on the strictly linear and affine part of the above combinatory algebra, \emph{i.e.} $({\cal P}, \cdot)$ together with combinators {\bf BCI} (and {\bf K}).

\subsection{From Principal Types to Involutions (and back)}

Our approach makes it possible to highlight a {\em duality} between {\em principal type schemes} and the interpretation of {\em strict and affine} combinators as involutions in $\cal P$. The following algorithm is a \emph{transform} which, given a principal type scheme, \ie\ a {\em global} representation of an object, yields for each type-variable (a component of) an involution:

\begin{definition} \label{Principal} Given a closed term $M$ of  $\lambda^A$-calculus such that $\Vdash_A M :\mu$, for each type variable $\alpha \in \mu$, the judgements ${\cal T}(\alpha, \mu) $ yield a pair in the graph of a
partial involution, if $\alpha$ occurs twice in $\mu$, or an element of $T_{\Sigma}$, if $\alpha$ occurs once in $\mu$: \medskip \\
${\cal T}(\alpha, \alpha)\ =\ \alpha$    \ \ \ \ \ \ \ \ \ \ \
${\cal T}(\alpha, \mu(\alpha) \rightarrow \nu(\alpha))\ =\ l({\cal T}(\alpha,  \mu(\alpha)))\ \leftrightarrow\ r({\cal T}(\alpha,  \nu(\alpha)))$ \medskip \\
${\cal T}(\alpha, \mu(\alpha) \rightarrow \nu)\ =\ l[{\cal T}(\alpha,  \mu(\alpha))] $ \ \ \
${\cal T}(\alpha, \mu \rightarrow \nu(\alpha))\ =\ r[{\cal T}(\alpha,  \nu(\alpha))]$  \medskip \\
where $r[x] = \begin{cases} rx_1 \leftrightarrow rx_2 & \mbox {if } x= x_1 \leftrightarrow x_2 \ \wedge \ x_1, x_2\in T_{\Sigma}
\\ rx & \mbox{\ \ otherwise}
\end{cases}$
\\  and similarly for $l[x]$. \medskip \\
We define the partial involution
$ f_{\mu} = \{ {\cal T} (\alpha, \mu)  \mid  \alpha \mbox{\ appears twice in } \mu \}  \ .$

\end{definition}

Vice versa, any partial involution interpreting a closed ${\bf {CL}^{A}}$-term $M$ induces  the corresponding  {\em principal type}, inverting the clauses in Definition \ref{Principal}. Notice that so doing we can derive a principal type scheme from any partial involution, not just those which are indeed interpretations of $\lambda$-terms. This remark will be crucial in addressing Abramsky's open question in Section \ref{openquestion}.

\begin{definition}
We denote by $\p{\ }_{\cal P}$ the interpretation of closed ${\bf CL^A}$-terms in $({\cal P}, \cdot)$.
\end{definition}

\begin{theorem}\label{ven}
Given a closed  term of ${\bf {CL}^{A}}$, say $M$, the partial involution interpreting $M$,  namely $\p{M}_{\cal P}$, can be read off the principal type scheme of  $(M)_{\lambda^{A}}$, \ie\
$\Vdash (M)_{\lambda^{A}}: \mu$ if and only if $ \p{M}_{\cal P}= f_{\mu}$.
\end{theorem}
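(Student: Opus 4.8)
The plan is to argue by structural induction on the closed $\mathbf{CL^A}$-term $M$, which is built from the combinators $B,C,I,K$ by application. The induction is powered by two homomorphism properties: the interpretation is a homomorphism for application, so $\p{PQ}_{\cal P}=\p{P}_{\cal P}\cdot\p{Q}_{\cal P}$, and the translation $(\ )_{\lambda^A}$ is a homomorphism as well, so $(PQ)_{\lambda^A}=(P)_{\lambda^A}(Q)_{\lambda^A}$ and the principal type of the latter is obtained from the principal types of its two factors by the application rule of Definition~\ref{principal} (shared by the affine system). I would establish the forward implication --- if $\mu$ is the principal type of $(M)_{\lambda^A}$ then $\p{M}_{\cal P}=f_\mu$ --- by this induction; the reverse implication is then the routine direction, following from uniqueness of principal types up to renaming (the affine version of Theorem~\ref{theorem_principal}(i)) together with the invertibility of the transform ${\cal T}$ on principal types (where every ignored position is a maximally general fresh variable), exactly the ``reading back'' observed after Definition~\ref{Principal}.

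For the base cases $M\in\{I,K,B,C\}$ I would compute $f_{\mu_X}$ directly from the known principal type $\mu_X$ by applying the clauses of Definition~\ref{Principal} to each twice-occurring type variable, and check pair by pair that the outcome equals the partial involution assigned to $X$ in the Proposition above. For instance $\mu_I=\alpha\to\alpha$ yields the single transposition $l\alpha\leftrightarrow r\alpha$; $\mu_K=\alpha\to\beta\to\alpha$ yields $l\alpha\leftrightarrow r^2\alpha$, the once-occurring $\beta$ contributing nothing; and $\mu_B$, $\mu_C$ each yield the three transpositions listed for $B$ and $C$. This is a finite, routine verification.

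The inductive step is the conceptual core and the expected main obstacle. After the two homomorphism reductions and the inductive hypothesis, what remains is a purely type-theoretic lemma asserting that GoI application \emph{computes unification}. Writing the principal type of $(P)_{\lambda^A}$ as $\sigma\to\tau$ and that of $(Q)_{\lambda^A}$ as $\rho$ (with disjoint variables), and setting $U=MGU(\sigma,\rho)$ as in the application rule, one must prove
\[
f_{\sigma\to\tau}\cdot f_\rho \;=\; f_{U(\tau)}.
\]
The organising observation is that the four restrictions $f_{rr},f_{rl},f_{lr},f_{ll}$ of $f_{\sigma\to\tau}$ record precisely where the two occurrences of each twice-occurring variable lie: $f_{rr}$ collects the variables with both occurrences inside $\tau$, $f_{ll}$ those with both occurrences inside $\sigma$, and $f_{rl}$, $f_{lr}$ those split between $\sigma$ and $\tau$. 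I would then show that the Execution Formula $f_{rr}\cup(f_{rl};f_\rho;(f_{ll};f_\rho)^*;f_{lr})$ traces out exactly the identifications performed by $MGU(\sigma,\rho)$: the $f_{rr}$ summand carries the variables internal to $\tau$ through unchanged (they are untouched by $U$), while each maximal alternating path through the feedback loop realises one chain of recursive unification steps matching a subterm of $\sigma$ against the corresponding subterm of $\rho$, with the $f_{ll}$-bounces hopping between the two occurrences of a shared variable of $\sigma$ so as to propagate the substitution into $\tau$; the two endpoints of such a path are then exactly the $l/r$-addresses that ${\cal T}$ reads off from $U(\tau)$.

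The delicate points I expect to fight with are twofold. First, well-definedness: that the Kleene star $(f_{ll};f_\rho)^*$ terminates and that the whole composite is again a \emph{partial involution}; I would tie termination to termination of the $MGU$ algorithm, and single-valuedness (hence the involution property) to the fact, from the affine version of Theorem~\ref{theorem_principal}(ii), that each type variable occurs at most twice, so no move admits two continuations. Second, the exact bookkeeping of the endpoints: that an alternating path from a $\tau$-address $a$ to a $\tau$-address $b$ produces precisely the pair ${\cal T}(\gamma,U(\tau))$ for the variable $\gamma$ of $U(\tau)$ whose two occurrences sit at $a$ and $b$. I would prove this second part by induction on the recursive structure of the unification of $\sigma$ with $\rho$, matching each unfolding of the star to one recursive call of $MGU$ and using that the $l/r$-encoding of Definition~\ref{Principal} is exactly the address scheme underlying both ${\cal T}$ and the restrictions $f_{ij}$. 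This equivalence ``application $=$ unification'' is the substantive content of the theorem; once it is in place, the inductive step, and with it the whole statement, follows.
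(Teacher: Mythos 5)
Your proposal follows essentially the same route as the paper's own proof sketch: structural induction on closed $\mathbf{CL^A}$-terms, direct verification that $f_\mu$ matches the given involutions for $B,C,I,K$, and an inductive step reducing (via the two homomorphism properties) to the key lemma that GoI application computes unification of principal types, which is established by chasing the control-flow diagram of Definition~\ref{cdot} against the behaviour of the $MGU$. Your extra attention to termination of the Kleene star, preservation of the involution property, and the endpoint bookkeeping fills in details the paper leaves implicit, but the overall architecture is identical.
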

\begin{proof} (Sketch)
By induction on the structure of ${\bf {CL}^{A}}$-terms. One can easily check that the thesis holds for combinators B, C, I, K. The inductive step amounts to showing that the notion of application in ${\cal P}$ corresponds to computing the  principal type scheme of the application, \emph{i.e.}, for $MN$ closed
${\bf {CL}^{A}}$-term, if $\Vdash M:\mu$, $\Vdash N:\tau$, $\mathit{TVar}(\mu) \cap \mathit{TVar}(\tau) =\emptyset$, $U'=MGU(\mu, \alpha \rightarrow \beta)$,
$U=  MGU(U'(\alpha), \tau)$, $\alpha, \beta$ fresh,  then
$f_{U\circ U' (\beta)} = f_{\mu} \cdot f_{\tau}$. This latter fact can be proved by chasing, along the control flow diagram in the definition of application, the behaviour of the MGU.
\end{proof}

We are finally in the position of justifying the claims, in the introduction, that our analysis unveils three conceptually independent, but ultimately equivalent, accounts  of {\em application} in the $\lambda$-calculus: {\em $\beta$-reduction}, the GoI application of involutions based on symmetric feedback/Girard's {\em Execution Formula}, and {\em unification} of principal types. In effect, computing the partial involutions $\p{M}_{\cal P} \cdot \p{N}_{\cal P}$, according to Definition \ref{cdot},  amounts by Theorem \ref{ven} to {\em unifying} the left-hand side of the principal type of $M$ with the principal type of $N$, thus computing the principal type of $MN$. Using Definition \ref{Principal} we can finally read off from this type scheme the partial involution $\p{MN}_{\cal P}$.

The following theorem concludes our model theoretic analysis:
\begin{theorem}\label{Nocomb}
\hfill
\\ (i) The strictly linear combinatory algebra of partial involutions
$(\cal P, \cdot)$ is a strictly linear  $\lambda$-algebra, albeit not a strictly linear combinatory $\lambda$-model.
\\ (ii) The strictly affine combinatory algebra of partial involutions
$(\cal P, \cdot)$ is not a strictly affine $\lambda$-algebra.
\end{theorem}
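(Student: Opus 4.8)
The plan is to treat the three assertions of Theorem~\ref{Nocomb} separately, since only the claim that $(\mathcal P,\cdot)$ fails to be a combinatory model requires genuinely delicate work; the other two follow cleanly from the type-theoretic machinery already developed. For the first half of (i), that $(\mathcal P,\cdot)$ is a strictly linear $\lambda$-algebra, I would invoke the strictly linear form of Theorem~\ref{th:eqforSALA} (the one obtained by deleting every equation referring to $K$): it reduces the $\lambda$-algebra property to verifying a fixed finite list of equations $\p{M}_{\cal P}=\p{N}_{\cal P}$ between closed pure $\mathbf{CL^L}$-terms, each pair satisfying $(M)_{\lambda^L}=_{\beta^L}(N)_{\lambda^L}$. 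Every such equation is then immediate: by Theorem~\ref{ven} one has $\p{M}_{\cal P}=f_{\mu}$ and $\p{N}_{\cal P}=f_{\nu}$, where $\mu,\nu$ are the principal type schemes of $(M)_{\lambda^L}$ and $(N)_{\lambda^L}$, and by Strictly Linear Subject Conversion (Theorem~\ref{subjectreduction}) $\beta^L$-convertible terms share the same principal type, so $\mu=_\alpha\nu$ and hence $f_{\mu}=f_{\nu}$. The whole list therefore collapses, and the conceptual reason $(\mathcal P,\cdot)$ is a $\lambda$-algebra is exactly that subject conversion is available in the linear discipline.

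For the second half of (i), that $(\mathcal P,\cdot)$ is not a strictly linear combinatory $\lambda$-model, I would argue that no selector $\epsilon$ in the sense of Definition~\ref{due} can exist. The starting observation is that weak extensionality fails in $\mathcal P$: there are distinct partial involutions that are applicatively indistinguishable, for instance an involution carrying pairs built over the constructor $\langle\,,\,\rangle$ or the constant $e$ that never feed into the $r/l$-tagged control flow of Definition~\ref{cdot}, so that $f\cdot h=g\cdot h$ for every $h$ while $f\neq g$. Since a selector must satisfy $\epsilon xy=xy$ for all $x,y$, its principal-type reading (via the inverse of Definition~\ref{Principal}) constrains its behaviour precisely on the components that application exposes, which reduces the second selector condition to genuine weak extensionality; the witnesses above then refute it. This is the Plotkin/universal-generator phenomenon by which the $\omega$-rule was disproved (\cite{Bar84}, Section~17.3--4), transported into $\mathcal P$. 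I expect this to be the main obstacle of the whole theorem: the hard part is not exhibiting one pair of indistinguishable involutions, but ruling out a cleverly \emph{collapsing} $\epsilon$ for all such pairs simultaneously, and this is where the finitary, type-theoretic description of involutions must carry the argument.

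For (ii) I would give a direct and concrete refutation, exploiting exactly the failure of affine subject conversion diagnosed in Subsection~\ref{noaffine}. Consider the two closed affine $\lambda$-terms $\lambda xyz.x$ and $\lambda xyz.(\lambda w.x)(yz)$, which are $\beta^A$-convertible because the inner $\lambda w.x$ discards its argument $yz$. Their principal type schemes differ: the normal form carries $\alpha_1\rightarrow\alpha_2\rightarrow\alpha_3\rightarrow\alpha_1$, in which only $\alpha_1$ occurs twice, whereas $\lambda xyz.(\lambda w.x)(yz)$ admits only the strictly more specific $\alpha_1\rightarrow(\alpha_2\rightarrow\alpha_3)\rightarrow\alpha_2\rightarrow\alpha_1$, in which both $\alpha_1$ and $\alpha_2$ occur twice. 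Applying the transform of Definition~\ref{Principal} to these two types yields partial involutions with a different number of pair-families, hence genuinely distinct. Taking $M,N$ to be the compilations into $\mathbf{CL^A}$ of these two $\lambda$-terms (so that $(M)_{\lambda^A}=_{\lambda^A}(N)_{\lambda^A}$), Theorem~\ref{ven} gives $\p{M}_{\cal P}=f_{\mu}\neq f_{\nu}=\p{N}_{\cal P}$. This exhibits closed $M,N$ with $(M)_{\lambda^A}=_{\lambda^A}(N)_{\lambda^A}$ yet $\p{M}_{\cal P}\neq\p{N}_{\cal P}$, contradicting the defining implication of a strictly affine $\lambda$-algebra. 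The only routine residue is the two principal-type computations and the check that the resulting involutions are distinct, both of which are mechanical; the real content is that, unlike the linear case, principal types are not preserved under $\beta^A$-conversion, which is precisely what severs the link between provable $\lambda^A$-equality and equality in $\mathcal P$.
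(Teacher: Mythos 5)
Your part (ii) and the first half of (i) are essentially sound. For (ii) you have, in substance, the paper's own counterexample: the paper takes the combinatory terms $BBK$ and $BKK$ (whose $\lambda$-translations are exactly your two terms, up to conversion) and exhibits $\p{BBK}_{\cal P}=\{lx\leftrightarrow r^3x,\ rl^2x\leftrightarrow r^2lx\}\neq\{lx\leftrightarrow r^3x\}=\p{BKK}_{\cal P}$; your derivation of these involutions from the two principal types via Theorem~\ref{ven} is the same computation. One point of care: Theorem~\ref{ven} refers to the principal type of $(M)_{\lambda^A}$, i.e.\ of the translation of the \emph{compiled} term, and since affine subject conversion fails you may not silently replace it by the principal type of the original term $\lambda xyz.x$; you must compute the type of the compiled term itself (it does come out as you claim, but this is a step, not a given). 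For the first half of (i) your route genuinely differs from the paper's: the paper verifies the finite equation list of the strictly linear version of Theorem~\ref{th:eqforSALA} by explicit computation of involutions, mechanized in Erlang (Section~\ref{erlang}), whereas you deduce all those equations at once from Theorem~\ref{ven} together with Strictly Linear Subject Conversion (Theorem~\ref{subjectreduction}). This is not circular (neither theorem depends on Theorem~\ref{Nocomb}), and it isolates the conceptual reason the linear case works, where the paper only offers brute force; this is a real improvement in transparency.

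The genuine gap is in the second half of (i). Exhibiting distinct, applicatively indistinguishable involutions (e.g.\ $\emptyset$ and $\{l\alpha\leftrightarrow l\alpha\}$) refutes nothing by itself: a combinatory $\lambda$-model is not required to be extensional, and the selector is precisely \emph{permitted} to collapse such pairs --- $\epsilon\cdot\emptyset=\epsilon\cdot\{l\alpha\leftrightarrow l\alpha\}$ is exactly what the Meyer--Scott condition would arrange. Your proposed reduction of the second selector condition to ``genuine weak extensionality'' is therefore not a valid step, and you concede that ruling out a collapsing $\epsilon$ is the hard part without actually doing it; the vague appeal to a ``principal-type reading'' of $\epsilon$ cannot carry this, since $\epsilon$ need not denote any combinator. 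The paper's Lemma~\ref{nocomb} supplies precisely the missing content: from the first selector equation alone, $\forall x,y.\ E\cdot x\cdot y=x\cdot y$, it derives structural constraints on $E$, namely $E_{rr}=\emptyset$ (hence $E\cdot\emptyset=\emptyset$) and $E_{ll}=\emptyset$ (hence $E$ is additive: $E\cdot(A\cup B)=E\cdot A\cup E\cdot B$). Then, taking $X=\{r\alpha\leftrightarrow l\alpha,\ l\beta\leftrightarrow l\beta\}$ and $Y=\{\alpha\leftrightarrow\beta\}$, one has $E\cdot X\cdot Y=X\cdot Y\neq\emptyset$ while $E\cdot\{r\alpha\leftrightarrow l\alpha\}\cdot Y=\emptyset$, so additivity forces $E\cdot\{l\beta\leftrightarrow l\beta\}\neq\emptyset=E\cdot\emptyset$. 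Thus the first selector equation already \emph{prevents} $E$ from collapsing the indistinguishable pair, contradicting the second selector condition. Without this argument (or an equivalent one), your proof that $({\cal P},\cdot)$ is not a strictly linear combinatory $\lambda$-model does not go through.
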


\begin{proof}\hfill
\\ (i) All the equations in Theorem \ref{th:eqforSALA} have been verified.
In order to avoid errors we used the Erlang~\cite{Erlang1,Erlang2} program described in Section~\ref{erlang}.
\item The proof that there does not exist a term which behaves as a selector combinator, namely that $\epsilon$ does not exist, follows from Lemma \ref{nocomb} below. The combination of items (i) and (ii) of Lemma~\ref{nocomb} below contradicts the unique selection property of $\epsilon$, namely we exhibit two objects,  \ie\ $\emptyset$ and $\{ l\alpha \leftrightarrow l \alpha \}$, which have the same empty applicative behaviour,  but for any $E$ which satisfies $\forall x,y.\ E \cdot x \cdot y= x \cdot y$, we have $E \cdot \emptyset \neq E \cdot \{ l\alpha \leftrightarrow l \alpha\}$. Consider first the terms $ X = \{ r\alpha \leftrightarrow l\alpha ,  l\beta \leftrightarrow l\beta\}$ and $Y= \{ \alpha \leftrightarrow \beta\}$, with $\alpha\neq \beta$. Clearly we have $X \cdot Y = \{\alpha \leftrightarrow \alpha\}$.
But $E \cdot X =  E \cdot \{ r\alpha \leftrightarrow l\alpha\} \cup E
\cdot \{ l\beta \leftrightarrow l\beta\}$, by Lemma~\ref{nocomb} (ii). Now  $E \cdot \{ r\alpha \leftrightarrow l\alpha\} \cdot Y = \{ r\alpha \leftrightarrow l\alpha\} \cdot Y  = \emptyset $. Hence $E \cdot \{ l\beta \leftrightarrow l\beta\} \neq \emptyset$, since $E\cdot X\cdot Y =X\cdot Y \neq \emptyset$.\smallskip
\\ (ii) We have that $(BBK)_{\lambda^A} = (BKK)_{\lambda^A}$, but  $\p{BBK }_{\cal P} \neq \p{BKK}_{\cal P}$. Namely, $\p{BBK }_{\cal P} = \p{ \lambda^* xyz.Kx(yz) }_{\cal P} = \{lx \leftrightarrow r^3x, rl^2x\leftrightarrow r^2lx\}$ while $\p{BKK }_{\cal P}= \p{\lambda^* xyz.x }_{\cal P} =  \{lx \leftrightarrow r^3x\}.$
\end{proof}

\begin{lemma}\label{nocomb}
Assume that there exists $E \in {\cal P}$ such that $\forall x,y. \ E \cdot x \cdot y\ =\ x \cdot y$, then
\\
(i)   $E_{rr} = \emptyset$, and hence $E \cdot \emptyset = \emptyset$;
\\ (ii)   $E_{ll} = \emptyset$,  and hence $E$ has an ``additive'' applicative behaviour, namely $E \cdot (A \cup B)= (E\cdot A) \cup (E \cdot B)$.
\end{lemma}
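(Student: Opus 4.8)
The plan is to unfold the definition of linear application from Definition~\ref{cdot} and evaluate $E\cdot x$ on a few well-chosen probe operands, namely $\emptyset$, the identity $I=\{lx\leftrightarrow rx\}$, and singleton ``fixpoint'' involutions $\{w\leftrightarrow w\}$. The point is that $f\cdot g$ is completely determined by the four blocks $f_{rr},f_{rl},f_{lr},f_{ll}$ together with the feedback loop $(f_{ll};g)^{*}$, and that two easy identities do most of the bookkeeping: composition is annihilated by $\emptyset$ (so $f_{rl};\emptyset=\emptyset$) while $\emptyset^{*}=\mathrm{id}$, and $f_{rr}\subseteq f\cdot g$ for every $g$ since it is one side of the defining union. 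I will also use the involutive symmetry $(u,v)\in f_{rl}\Leftrightarrow(v,u)\in f_{lr}$.

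For part (i), I first compute $E\cdot\emptyset$: the feedback term contains $E_{rl};\emptyset=\emptyset$, so it collapses and $E\cdot\emptyset=E_{rr}$. Feeding this into the hypothesis with $x=\emptyset$ gives $E_{rr}\cdot y=(E\cdot\emptyset)\cdot y=\emptyset\cdot y=\emptyset$ for every $y$. Probing with $y=\emptyset$ kills $(E_{rr})_{rr}$, and probing with $y=\{b\leftrightarrow b\}$ kills $(E_{rr})_{rl}$: a surviving pair $a\leftrightarrow b$ there would route $a\to b\to b\to a$ through $(E_{rr})_{rl}$, $y$, the identity part of the star, and the mirror $(E_{rr})_{lr}$, producing $(a,a)\in E_{rr}\cdot y\neq\emptyset$. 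Hence $E_{rr}$ is concentrated on its $ll$-block, i.e.\ every pair has the shape $lu\leftrightarrow lv$. To remove these I invoke a sub-fact: any $h\in{\cal P}$ with $h\cdot y=y$ for all $y$ equals $I$. Indeed $h_{rr}=\emptyset$ (take $y=\emptyset$); the fixpoint probes together with injectivity of $h_{rl}$ force $h_{rl}=h_{lr}=\mathrm{id}$ (to produce every $(w,w)$ and nothing else); then $y=\mathrm{id}$ and a transposition probe $y=\{c\leftrightarrow c'\}$ force $h_{ll}=\emptyset$ (a diagonal or off-diagonal $ll$-pair would make the starred loop output a non-injective relation). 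Thus $h=I$. Applying this to $h=E\cdot I$, which satisfies $E\cdot I\cdot y=I\cdot y=y$, gives $E\cdot I=I$. But $E_{rr}\subseteq E\cdot I=I$, and $I$ has no pair of the form $lu\leftrightarrow lv$, so the $ll$-block of $E_{rr}$ is empty. Therefore $E_{rr}=\emptyset$ and $E\cdot\emptyset=\emptyset$.

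For part (ii), the claim is that a non-empty $E_{ll}$ is always detectable through the loop $(E_{ll};x)^{*}$. From part (i) we already have $E_{rr}=\emptyset$ and $E\cdot I=I$, whence $E_{rl};I;(E_{ll};I)^{*};E_{lr}=I$ and in particular $E_{rl},E_{lr}\neq\emptyset$. Assuming a pair $a\leftrightarrow b\in E_{ll}$, I would choose an operand $x$ that wires the exit of $E_{rl}$ into $a$ and sends $b$ on toward $E_{lr}$, so that the loop fires at least once and $(E_{ll};x)^{*}$ contributes to $E\cdot x$ a move that is absent from the $E_{ll}=\emptyset$ computation; composing with a suitable $y$ then exhibits $(E\cdot x)\cdot y\neq x\cdot y$, contradicting the hypothesis. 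Hence $E_{ll}=\emptyset$. Once this holds, the star collapses, $(E_{ll};x)^{*}=\emptyset^{*}=\mathrm{id}$, so $E\cdot x=E_{rr}\cup(E_{rl};x;E_{lr})$; since relational composition distributes over unions, $E\cdot(A\cup B)=(E\cdot A)\cup(E\cdot B)$, which is the advertised additive behaviour.

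The main obstacle is the $ll$-block together with the Kleene star, where all the real content sits. In (i) the delicate step is eliminating the ``invisible'' $ll$-pairs of $E_{rr}$: they leave the applicative behaviour of $E\cdot\emptyset$ empty and so cannot be detected directly, which is exactly why the detour through $E\cdot I=I$ — where the $rl$-channel is active and the loop becomes visible — is required. In (ii) the delicate step is manufacturing the operand $x$ that forces the loop to fire while keeping $E\cdot x$ a genuine partial involution, and then the $y$ that surfaces the extra move; both amount to chasing the control-flow diagram of Figure~\ref{fig:lappdiag}, and tracking which iterate of $(E_{ll};x)^{*}$ produces the offending pair is the part I expect to require the most care.
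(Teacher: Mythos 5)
Your part (i) is correct and goes by a genuinely different route than the paper's. The paper does a case analysis on the possible shapes of pairs in $E_{rr}$: pairs $r\alpha\leftrightarrow r\beta$ die on the probe $\emptyset$, pairs $r\alpha\leftrightarrow l\beta$ on a two-pair probe followed by $\{\beta\leftrightarrow\beta\}$, and the ``invisible'' pairs $l\alpha\leftrightarrow l\beta$ are excluded because $E\cdot\{r\alpha\leftrightarrow l\alpha\}$ would then have to relate $l\alpha$ both to $l\beta$ and to $r\alpha$, contradicting functionality of partial involutions. Your treatment of the visible blocks (via $E_{rr}=E\cdot\emptyset$ and $E_{rr}\cdot y=\emptyset$) is the same idea in cleaner form, while your disposal of the $ll$-block through the inclusion $E_{rr}\subseteq E\cdot I=I$ is a different and arguably tidier mechanism, which moreover covers the ``garbage'' pairs the paper only waves at. One caveat you share with the paper: the sub-fact ``$h\cdot y=y$ for all $y$ implies $h=I$'' is literally true only up to pairs of terms not headed by $l$ or $r$ (e.g.\ $(e,e)$), which are invisible to application; the same proviso already affects the statement $E_{rr}=\emptyset$ itself.

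Part (ii), however, has a genuine gap: you never prove $E_{ll}=\emptyset$, which is the entire content of the item. What you write --- ``I would choose an operand $x$ that wires the exit of $E_{rl}$ into $a$ and sends $b$ on toward $E_{lr}$'' --- is a plan, and everything it needs is missing: an explicit $x$ (which, being an involution, must also contain the symmetric pairs, and those can fire unintended paths), the follow-up $y$, and the verification that the loop-produced move changes the \emph{applicative behaviour} of $E\cdot x$ relative to $x$. That last point is where the real difficulty sits: the relevant comparison is $(E\cdot x)\cdot y$ against $x\cdot y$, not ``$E\cdot x$ with the loop'' against ``$E\cdot x$ without it'', since $E\cdot x$ differs from $x$ anyway (otherwise Theorem~\ref{Nocomb}(i) would be trivial). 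An extra move is a contradiction only if you steer both its endpoints into a visible block, say $(E\cdot x)_{rr}$ where $y=\emptyset$ surfaces it, and rule out coincidences with $x_{rr}$; you flag exactly this as ``the part requiring the most care'', which is to say the proof is not there.

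For comparison, the paper's argument for (ii) avoids hypothetical-pair chasing and is structural: from $E\cdot\{r\alpha\leftrightarrow l\alpha\}\cdot\{\alpha\leftrightarrow\alpha\}=\{\alpha\leftrightarrow\alpha\}$ it first extracts $r\alpha\leftrightarrow l\alpha\in E\cdot\{r\alpha\leftrightarrow l\alpha\}$ for every $\alpha$, and then a case analysis on the control-flow diagram shows that producing this pair forces one of two explicit wirings in $E$ (either $ll\alpha\leftrightarrow rr\alpha\in E$, or $lr\alpha\leftrightarrow rr\alpha\in E$ and $rl\alpha\leftrightarrow ll\alpha\in E$). Either way, every $l$-rooted term --- apart from the $e$-corner case, killed by the probe $\{\alpha\leftrightarrow e\}$ --- is already matched in $E$ to an $r$-rooted term, so determinism of the involution leaves no room for any pair in $E_{ll}$. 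To close the gap you would need either to carry out your detection construction in full, or to switch to this positive, structural determination of $E$.
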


\begin{proof} \hfill
\\ (i) We proceed in stages.
\begin{itemize}

\item $r\alpha \leftrightarrow r\beta \notin E_{rr}$,  for any $\alpha,\beta$,  otherwise $\alpha \leftrightarrow \beta \in E \cdot \emptyset \cdot  X =  \emptyset \cdot X =\emptyset $, contradiction.

\item  $r\alpha \leftrightarrow l\beta \notin E_{rr}$, for any $\alpha,\beta$,  otherwise let
$A = \{r\alpha \leftrightarrow l\delta, r\delta \leftrightarrow l\alpha\}$  with $\delta$ and $\beta$ not unifiable. Then $A \cdot \{\beta \leftrightarrow \beta\}\ =\ \emptyset$ , but $E \cdot A \cdot \{\beta \leftrightarrow \beta\} =  \{\alpha\leftrightarrow \alpha\}$. Contradiction.

\item  $l\alpha \leftrightarrow l\beta \notin E_{rr}$, for any $\alpha \neq \beta$,
otherwise let $A = \{r\alpha \leftrightarrow l\alpha\}$, then $A \cdot \{\alpha \leftrightarrow \alpha\} = \{\alpha \leftrightarrow \alpha\}$.
Now, since $E$ is a selector, $E \cdot A \cdot \{\alpha \leftrightarrow \alpha\}= \{\alpha \leftrightarrow \alpha \}$,  there must occur in $E\cdot A$ a term $r\alpha \leftrightarrow l\rho$ such that $\rho$ can unify with the l.h.s. of the only pair in the argument, \ie\ $\alpha = \rho$, but then $E \cdot A$ would no longer be a non-ambiguous reversible relation, because by assumption $ l\alpha \leftrightarrow l\beta\in E \cdot A$. Contradiction.
\item Similar arguments can be used to rule out the remaining cases, \ie\ $l\alpha \leftrightarrow l\alpha\in E_{rr}$ and the case where one of the components is garbage, \ie\ it has no functional effect.
\end{itemize}
(ii) We proceed in stages.
\begin{itemize}
\item From the very definition of application in $\cal P$, we have immediately that, if  $A_{ll} = \emptyset$, then $A$ has an ``additive'' behaviour under application, because it calls the argument only once.
\item $E_{ll} = \emptyset$ because for all $l(\alpha)$ either  $ll\alpha \leftrightarrow rr\alpha \in E$ or $lr\alpha \leftrightarrow rr\alpha \in E$ and $rl\alpha \leftrightarrow ll\alpha \in E$, and hence there are no $l\alpha \longrightarrow l\beta\in E_{ll}$, since $E$ is an involution and the rewrite rules are deterministic.
To see the above, first notice that $r\alpha \leftrightarrow l\alpha\ \in \ E \cdot \{ r\alpha \leftrightarrow l\alpha\} $ for all $\alpha$,  otherwise, checking the control-flow diagram, one can easily see that we could not have that  $ E \cdot \{ r\alpha \leftrightarrow l\alpha\} \cdot \{\alpha \leftrightarrow \alpha\} =  \{\alpha \leftrightarrow \alpha\}$.
But now, again with just a little case analysis on the control-flow diagram, one can see that there are only two alternatives in $E_{rl}$ and $E_{lr}$, which give rise to the cases above.
\\ The only case left is $le \leftrightarrow le \in E_{ll}$. But then we would have that
\\ $\alpha \leftrightarrow \alpha \in (E \cdot \{r\alpha \leftrightarrow l\alpha\}\cdot \{\alpha \leftrightarrow e\})$, but
 $\{r\alpha \leftrightarrow l\alpha\}\cdot \{\alpha \leftrightarrow e\} = \emptyset$, contradiction.
\noindent
\\ Hence we have that
$\{ r\alpha \leftrightarrow l\alpha ,  r\delta \leftrightarrow l\delta\} \cdot \{\alpha \leftrightarrow \beta , \gamma \leftrightarrow \delta \} =  \emptyset$, \\
but
 $E \cdot \{ r\alpha \leftrightarrow l\alpha , r\delta \leftrightarrow l\delta\} \cdot \{\alpha \leftrightarrow \beta , \gamma \leftrightarrow \delta \} \supseteq \{ \alpha \leftrightarrow \beta\}$, contradiction.
\end{itemize}
\end{proof}

\subsubsection{Abramsky's Question.}\label{openquestion}
In \cite{Abr05}, S. Abramsky raised the question: ``Characterize those partial involutions which arise from {\em bi-orthogonal pattern-matching automata}, or alternatively, those which arise as denotations of combinators''.

Theorem~\ref{ven}  suggests an answer to the above question for the strictly affine fragment, \ie\ without the operator $<\ ,\ >$ in the language of partial involutions.
The first issue to address is how to present partial involutions. To this end we consider the language $T_{\Sigma^X}$, which is the initial term algebra over the signature $\Sigma^X$ for ${\Sigma^X_0} \equiv X$, where  $X$ is a set of variables, and ${\Sigma^X_1} = \{l,r\}$. Sets of pairs in $T_{\Sigma^X}$ denote schemata of pairs over $T_{\Sigma\setminus \Sigma_2}$, \ie\ partial involutions in $\cal P$.
As pointed out in the previous section, given a partial involution defined by a finite collection of pairs in $T_{\Sigma^X}$, say $H\equiv \{u_i \leftrightarrow v_i\}_{i\in I}$ for $u_i,v_i \in {\Sigma^X}$, we  can synthesize a type $\tau_H$ from $H$ by gradually specifying its tree-like format. Finally we check whether $\tau_H$ is the principal type of a strictly linear term. We proceed as follows. Each pair in $H$ will denote two leaves in the type $\tau_H$, tagged with the same type variable. The sequence of $l$'s and $r$'s, appearing in the prefix before a variable in a pair $u_i,v_i$, denotes the path along the tree of the type $\tau_H$, which is under formation, where the type variable will occur. A fresh type variable is used for each different pair.  At the end of this process we might not yet have obtained a complete type. Some leaves in the tree might not be tagged yet, these arise in correspondence of vacuous abstractions. We tag each such node with a new fresh type variable. $H$ is  finite otherwise we end up with an infinite type, which cannot be the principal type of a {\em finite} combinator. The type $\tau_H$ thus obtained has the property that each type variable occurs at most twice in it. Potentially it is  a  principal type.

The type $\tau_H$ is indeed a principal type of a closable $\lambda$-term (\ie\ a term which reduces to a closed term) if and only if it is an {\em implication tautology} in minimal logic. This can be effectively checked in polynomial-space~\cite{Statman79}.

To complete the argument we need to show that if the type $\tau_H$ is inhabited it is indeed inhabited by a term for which it is the  principal type.
\begin{proposition}\label{auspicio}
If $\mu$ is a type where each type variable occurs at most twice and it is inhabited by the  closed term $M$, then there exists $N$ such that $\Gamma \Vdash_L N: \mu$ and $N=_{\beta_L\eta}M$.
\end{proposition}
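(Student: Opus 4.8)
The plan is to replace $M$ by a canonical representative of its $\beta_L\eta$-class and then read the principal type directly off its syntactic shape. First I would pass to the $\beta$-normal, $\eta$-long form $N$ of $M$ relative to the target type $\mu$; since $\beta$-reduction and $\eta$-expansion preserve simple types, we still have $\vdash N:\mu$ with $N$ closed and $N=_{\beta_L\eta}M$. Writing $N$ in head form $\lambda x_1\ldots x_n.\,h\,N_1\cdots N_m$ with $\mu=\rho_1\to\cdots\to\rho_n\to\alpha$ (the conclusion $\alpha$ atomic, since $\eta$-longness forces the head $h$ to be fully applied), this normal form rigidly mirrors the arrow skeleton of $\mu$, and each $N_j$ is again an $\eta$-long normal form typed by a subformula of $\mu$.

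The heart of the argument — and the step I expect to be the main obstacle — is to show that $N$ is in fact \emph{strictly linear}, \ie\ every $\lambda$-bound variable occurs free exactly once in its scope. This is a polarity/counting argument: in an $\eta$-long normal form the type variables occurring inside the type $\rho$ of a binder $\lambda x{:}\rho$ are replicated once for each \emph{occurrence} of $x$ used as a head. Hence, if some bound variable were used two or more times, a type variable of its type would be forced to occur at least three times in $\mu$, contradicting the hypothesis; so every bound variable occurs at most once. Vacuous binders are then excluded because the strictly linear judgement $\vdash_L$ has no abstraction rule for $x\notin FV(M)$, and, concretely, a never-used hypothesis would leave a type variable matched only on its negative side — a singly-occurring atom, which cannot occur in the types at hand (where each variable occurs exactly twice). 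Thus each binder occurs exactly once and $N$ is strictly linear. Making the replication count precise — tracking the matching between positive and negative atom occurrences of $\mu$ induced by the long normal form — is the delicate point where the at-most-twice condition does its real work.

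Once $N$ is known to be strictly linear we have $\vdash_L N:\mu$, and it remains to promote this to $\Vdash_L N:\mu$, \ie\ to show that $\mu$ is already the \emph{principal} type of $N$. By Theorem~\ref{theorem_principal}(iv) there exist $\Gamma',\sigma'$ and a substitution $U$ with $\Gamma'\Vdash_L N:\sigma'$ and $U(\sigma')=\mu$, $U(\Gamma')=\Gamma$. Because $N$ is $\eta$-long, its principal type is already as expanded as the term structure demands, so $U$ can introduce no new arrows and maps type variables to type variables. Moreover $U$ cannot identify two distinct variables of $\sigma'$: for a closed strictly linear term the principal type is balanced, each variable of $\sigma'$ occurring exactly twice (Theorem~\ref{theorem_principal}(ii)), so merging two of them would produce a variable occurring four times in $\mu=U(\sigma')$, against the at-most-twice hypothesis. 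Hence $U$ is a renaming, $\sigma'=_\alpha\mu$, and therefore $\Gamma'\Vdash_L N:\mu$ up to $\alpha$-equivalence, giving the desired $N$ with $N=_{\beta_L\eta}M$.
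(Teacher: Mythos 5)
Your overall strategy --- replace $M$ by its $\eta$-long $\beta$-normal form $N$ and then argue that the balancedness of $\mu$ forces $N$ to be linear and $\mu$ to be its principal type up to renaming --- is genuinely different from the paper's proof. The paper works in the opposite direction: it starts from the principal type $\nu$ of $M$ itself, takes the substitution $U$ with $U(\nu)=\mu$ supplied by Theorem~\ref{theorem_principal}(iv), and then \emph{realises} each identification performed by $U$ by targeted expansions of $M$ (a long-$\eta$-expansion of the term variable whose type contains a doubly-occurring $\alpha$, and a $\beta$-expansion introducing a discarded redex for a singly-occurring $\alpha$). Your route would be attractive if it worked, but it has a genuine gap, and it is not merely the counting argument you flag as delicate.

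The gap is the claim that the $\eta$-long normal form $N$ has no vacuous binders. First, your justification misstates the hypothesis: $\mu$ is assumed to have each type variable occurring \emph{at most} twice, not exactly twice, so singly-occurring atoms are admissible (\eg\ $\mu=\alpha\rightarrow\beta\rightarrow\beta$, inhabited by $\lambda xy.y$). Second, and worse, the claim fails even under the stronger ``exactly twice'' reading: take $\mu=\alpha\rightarrow(\beta\rightarrow\beta)\rightarrow\alpha$, which is balanced, and $M=\lambda x.\lambda f.x$. Its $\eta$-long normal form is $\lambda x.\lambda f.x$ itself ($f$ has no occurrence to expand), which has a vacuous binder; indeed no strictly linear term inhabits this $\mu$ at all, since a $\beta$-normal strictly linear inhabitant would need a body of atomic type $\alpha$ with head $x$, leaving $f$ unused. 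The flaw in your polarity argument is that the type of a discarded hypothesis can be \emph{internally} balanced --- in $\beta\rightarrow\beta$ the two atom occurrences match each other --- so discarding it leaves no singly-occurring atom behind. This is precisely the case the paper's proof treats separately (its vacuous-abstraction case, illustrated by the example $H_2$), and it is fatal to your architecture, not just to one step: in the affine principal-type system of Section~\ref{noaffine} a vacuous abstraction contributes a \emph{fresh} type variable, so the principal type of $N=\lambda x.\lambda f.x$ is $\alpha\rightarrow\delta\rightarrow\alpha$, and the substitution $U$ sends the variable $\delta$ to the compound type $\beta\rightarrow\beta$ --- not a renaming --- so your final ``$U$ is a renaming'' step collapses as well. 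No $\eta$-expansion of $N$ can repair this, because the vacuously bound variable has no occurrence to expand; the paper's remedy is a $\beta$-\emph{expansion} (nesting a discarded redex that contains a long-$\eta$-expansion of the vacuously abstracted variable), \ie\ it must leave the realm of normal forms, exploiting exactly the failure of subject conversion for the affine system noted in Section~\ref{noaffine}. Any proof of the proposition must contain such a construction, and a strategy confined to normal forms cannot produce it.
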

\begin{proof}(Sketch)
If $M$ is a closed term, then there exists $\nu$ such that $\Gamma \Vdash_L M: \nu$. The variables in $\Gamma$ will be eventually erased. If  $M$ inhabits $\mu$, then by Theorem~\ref{theorem_principal} there exists a substitution $U$ such that $U(\nu)=\mu$. For each variable which is substituted by $U$, say $\alpha$, two cases can arise, either $\alpha$ occurs twice or once.  In the first case we will replace the term variable, say $x$, in $M$ in whose type $\alpha$  occurs, which must exist, by a suitable long-$\eta$-expansion of $x$. This long $\eta$-expansion can always be carried out because the typed $\eta$-expansion rule is a derivable rule in the typing system.
\\
\noindent In case the type variable $\alpha$ occurs only once in $M$,  there is a subterm of $M$ which is embedded in a vacuous abstraction.  The term $N$ is obtained by nesting that  subterm  with a new vacuous $\lambda$-abstraction applied to a long-$\eta$-expansion of the variable vacuously abstracted in $M$.

\noindent Here are two examples. From
\noindent $H_1 = \{ lllx \leftrightarrow rllx, llrx\leftrightarrow lrx,rrx \leftrightarrow rlx \}$ we can synthesize the type $((\alpha \rightarrow \beta)\rightarrow \gamma)\rightarrow (\alpha \rightarrow \beta)\rightarrow \gamma$. The identity, $\lambda x.x$, inhabits this type, but the type is not the principal type of the identity.  It is  instead the principal type of an $\eta$-expansion of the identity, namely $\lambda xy. x(\lambda z.yz)$.\\
\noindent From $H_2 = \{ lllx \leftrightarrow lrrx, llrx\leftrightarrow lrlx,lrrx \leftrightarrow rrrx \}$ we can synthesize the type $((\alpha \rightarrow \beta)\rightarrow (\beta \rightarrow \alpha))\rightarrow \gamma \rightarrow \gamma$. The term $\lambda yx.x$ inhabits this type which is the principal type of its $\beta$-expansion $\lambda yx. (\lambda w. x)(\lambda zw.yzw)$.

\end{proof}

So we can finally state the result:

\begin{theorem}
In the {\em strictly affine} case, the denotations of combinators in $\cal P$ are precisely the partial involutions from which we can synthesize, according to the procedure outlined above, a  principal type scheme which is a tautology in minimal logic.
\end{theorem}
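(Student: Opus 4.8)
The statement is an equivalence, so the plan is to prove the two inclusions separately, using Theorem~\ref{ven} as the bridge that translates between partial involutions and principal type schemes in both directions, and Proposition~\ref{auspicio} to supply the missing half of the backward direction.

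\emph{From combinators to tautologies.} Suppose first that $H=\p{P}_{\cal P}$ for some closed ${\bf CL}^{A}$-term $P$. By Theorem~\ref{ven}, $H=f_{\mu}$, where $\mu$ is the principal type of the $\lambda$-image $(P)_{\lambda^{A}}$. The synthesis procedure of Section~\ref{openquestion} is designed as the inverse of the transform of Definition~\ref{Principal}: from $H=f_{\mu}$ it reconstructs the arrow skeleton of $\mu$ together with the placement of its twice-occurring type variables, and it tags the remaining leaves — exactly the once-occurring, vacuous-abstraction positions of $\mu$ — with fresh variables. Hence $\tau_{H}$ coincides with $\mu$ up to renaming of the once-occurring variables. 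Since $(P)_{\lambda^{A}}$ is a closed term of type $\mu$, the type $\mu$, and therefore $\tau_{H}$, is inhabited; by the Curry--Howard correspondence an inhabited implicational type is precisely a provable formula of minimal logic, so $\tau_{H}$ is a tautology.

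\emph{From tautologies to combinators.} Conversely, assume $\tau_{H}$ is a tautology of minimal logic. Then $\tau_{H}$ is inhabited by some closed $\lambda$-term, and since $\tau_{H}$ has each type variable occurring at most twice, Proposition~\ref{auspicio} produces a (closable) term $N$ for which $\tau_{H}$ is the \emph{principal} type. I would then set $P=(N)_{CL^{A}}$, a closed combinator; by the Abstraction Theorem~\ref{absth} one has $(P)_{\lambda^{A}}=_{\lambda^{A}}N$, and the goal becomes to invoke Theorem~\ref{ven} to conclude $\p{P}_{\cal P}=f_{\tau_{H}}=H$. For this I must know that the principal type of $(P)_{\lambda^{A}}$ is $\tau_{H}$ \emph{itself}, and not merely some instance of it.

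\emph{The main obstacle.} This last point is the heart of the matter, and it is precisely where the affine nature of the calculus must be treated with care. As noted in Section~\ref{noaffine}, subject conversion fails in the affine discipline: $\beta$-reducing a term may \emph{generalise} its principal type (witness $\lambda xyz.(\lambda w.x)(yz)$ against its reduct $\lambda xyz.x$), so one cannot argue as in the linear Theorem~\ref{subjectreduction}. The saving observation is that the round-trip $N\mapsto(N)_{CL^{A}}\mapsto((N)_{CL^{A}})_{\lambda^{A}}$ performs \emph{no} $\beta$-reduction: by Definition~\ref{op} it merely replaces each abstraction by the corresponding $B,C,I,K$-application, leaving the applicative skeleton intact. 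I would therefore prove, by induction on $N$, that $\lambda^{*}$-abstraction preserves principal types, the crucial point being that in each clause of Definition~\ref{op} the chosen combinator is typed so as to \emph{implement} abstraction without specialising any subterm further than the original application already did: in the $B$- and $C$-cases the operator already carries an arrow type in the source term, so unifying it against the combinator's type creates no new constraint, while the $K$-case simply prefixes a fresh domain variable. Consequently the principal type of $((N)_{CL^{A}})_{\lambda^{A}}$ equals that of $N$, namely $\tau_{H}$, and Theorem~\ref{ven} yields $\p{P}_{\cal P}=f_{\tau_{H}}=H$. A minor point to settle alongside is that Proposition~\ref{auspicio}, stated for $\Vdash_{L}$, should be read in the affine system augmented by the vacuous-abstraction rule of Section~\ref{noaffine}; since the $\eta$-expansion and vacuous-abstraction steps in its proof preserve affineness, the witness $N$ is a genuine affine term and the whole argument remains inside $\cal P$ over the pairing-free signature $\Sigma\setminus\Sigma_{2}$.
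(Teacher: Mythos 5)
Your decomposition and choice of ingredients are exactly the paper's: Proposition~\ref{auspicio} for the passage from tautologies to denotations, and Definition~\ref{Principal} together with Theorem~\ref{ven} for the converse. On the backward direction you are in fact \emph{more} careful than the paper: the observation that affine subject conversion fails (Section~\ref{noaffine}), so that before invoking Theorem~\ref{ven} on $P=(N)_{CL^{A}}$ one must prove that the compilation of Definition~\ref{op} preserves principal types, names a step the paper's two-line proof passes over in silence, and the induction you propose is the right repair.

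The forward direction, however, contains a step that fails. You claim that $\tau_{H}$ coincides with $\mu$ up to renaming of once-occurring variables, on the grounds that the untagged leaves of the synthesized tree are exactly the once-occurring, vacuous-abstraction positions of $\mu$. This echoes a loose remark in Section~\ref{openquestion}, but it is false as a proof step: an untagged leaf of $\tau_{H}$ can correspond to a whole \emph{subtree} of $\mu$ all of whose variables occur once, and such subtrees arise from \emph{erased subterms}, not only from vacuous abstractions. Concretely, compile $N=\lambda xy.(\lambda w.y)(\lambda z.xz)$: one gets $P=B(C(BKI))(C(BBI)I)$, and a direct unification computation shows that the principal type of $(P)_{\lambda^{A}}$ is $\mu=(\alpha\rightarrow\beta)\rightarrow\gamma\rightarrow\gamma$, where the erased subterm $\lambda z.xz$ forces $x$'s type to be an arrow whose two variables each occur once in $\mu$. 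Hence $H=\p{P}_{\cal P}=\{rl\,x \leftrightarrow rr\,x\}$, and the synthesis yields $\tau_{H}=\delta\rightarrow\gamma\rightarrow\gamma$: a \emph{proper generalisation} of $\mu$, not a renaming of it. This matters because your next inference --- $\mu$ is inhabited, hence so is $\tau_{H}$ --- is not valid for generalisations: inhabitation is preserved by instantiation, not by anti-instantiation ($\alpha\rightarrow\alpha$ is inhabited, $\alpha\rightarrow\beta$ is not). So the first inclusion needs a separate argument that collapsing all-once subtrees of the principal type of a closed affine term preserves inhabitation. One way to supply it: in a $\beta$-normal $\eta$-long inhabitant of $\mu$, the head variable at every level must have a type whose final atom is shared with the required result type, and the atoms of an all-once subtree are shared with nothing; hence variables carrying such types are never applied, only discarded, so the same normal form inhabits $\tau_{H}$ (in the example, $\lambda xy.y$ inhabits $\delta\rightarrow\gamma\rightarrow\gamma$). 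Without something of this sort the forward direction is not established --- neither in your proposal nor, for that matter, in the paper's own sketch.
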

\begin{proof} Use Proposition~\ref{auspicio} above in one direction, and Definition~\ref{Principal} and Theorem~\ref{ven} in the opposite direction.
\end{proof}

The above is a satisfactory characterisation because it is conceptually independent both from $\lambda$-terms and from involutions.

\section{Automating Abramsky's Linear Application}\label{erlang}
Since the manual verification of complicated equations like those appearing in Theorem~\ref{th:eqforSALA} is a daunting and error-prone task, we developed an Erlang~\cite{Erlang1,Erlang2} program to automatize it (see the Appendix for the details). The main components of our program are related to the implementation of the \emph{Strictly Affine Abstraction Operator} (see Definition~\ref{op}) and of the linear application operator $f\cdot g$ (see Definition~\ref{cdot}) introduced by S. Abramsky in~\cite{Abr05}.

There are several reasons behind the choice of Erlang: expressive pattern matching mechanisms, avoidance of side effects thanks to one-time assignment variables, powerful libraries for smooth handling of lists, tuples etc. However, other functional languages can be an effective and viable choice as well.

Rules can be written as, \eg\ $l(x)\leftrightarrow r(x)$ (i.e., the representation of combinator \textbf{I} as a partial involution). For convenience, we allow the user to avoid specifying parentheses where their occurrence can be automatically inferred like in, e.g., $lx\leftrightarrow rx$. Moreover, we allow the user to write $\langle x,y\rangle$ instead of $p(x,y)$.

For instance, starting from a string representing the three rules of combinator \textbf{B} (i.e., \verb|"rrrX<->lrX, llX<->rlrX, rllX<->rrlX"|\footnote{In our implementation we actually use capital letters or strings with initial capital letter to denote variables, according to Erlang conventions.}), we obtain the following internal representation (each rule with \texttt{<->} yields two internal rules corresponding to the two possible directions of rewriting, since it is more convenient for coding purposes):
\begin{verbatim}
[{{r,{r,{r,{var,"X"}}}},{l,{r,{var,"X"}}}},
 {{l,{r,{var,"X"}}},{r,{r,{r,{var,"X"}}}}},
 {{l,{l,{var,"X"}}},{r,{l,{r,{var,"X"}}}}},
 {{r,{l,{r,{var,"X"}}}},{l,{l,{var,"X"}}}},
 {{r,{l,{l,{var,"X"}}}},{r,{r,{l,{var,"X"}}}}},
 {{r,{r,{l,{var,"X"}}}},{r,{l,{l,{var,"X"}}}}}
]
\end{verbatim}

Indeed, we can compute the composition $f;g$ (\texttt{compose(F,G)}) of two involutions $f$ and $g$ as the set of rules $\{(R_1,R_2)\mid R_1=s(F_1),\ R_2=s(G_2),\ (F_1,F_2)\in f, (G_1, G_2)\in g, \mathrm{and}\ s=m.g.u.(F_2,G_1)\}$, where m.g.u. stands for \emph{most general unifier} which can be implemented following Robinson's unification algorithm~\cite{Rob65}.
There is only a subtle issue to take into consideration, namely, unification may not work correctly if the sets of variables of $f$ and $g$ are not disjoint. Hence, we preventively rename variables of $f$ if this is not the case in the computation of $f;g$.
Once the implementation of $f;g$ is completed, it is trivial to define $f\cdot g$, unfolding its definition in terms of the composition operator and calculating $f_{rr}$ (\texttt{extract(F,r,r)}), $f_{rl}$ (\texttt{extract(F,r,l)}), $f_{ll}$ (\texttt{extract(F,l,l)}), $f_{lr}$ (\texttt{extract(F,l,r)}), exploiting in \texttt{extract} the powerful pattern matching features of Erlang. This approach has been used to verify all the equations appearing in this paper.

Let us see, as an example, how the verification of equation $\lambda^*xyz.C(C(BBx)$ $y)z=\lambda^*xyz.Cx(yz)$ from Theorem~\ref{th:eqforSALA} is carried out.
We first use \texttt{leex} and \texttt{yecc} (\emph{i.e.} the Erlang versions of Lex and Yacc) to build a lexical analyzer and a parser for the language of $\lambda$-terms, yielding an internal representation of the two $\lambda$-abstractions. Then, we apply the implementation of the \emph{Strictly Affine Abstraction Operator} to yield the following expressions:
\begin{enumerate}
\item $((C((BC)((B(BB))((B(BC))((C((BB)((BC)((B(BB))I))))I)))))I)$
\item $((C((BB)((BB)((BC)I))))((C((BB)I))I))$
\end{enumerate}
corresponding, respectively, to $\lambda^*xyz.C(C(BBx)y)z$ and to $\lambda^*xyz.Cx(yz)$.
Afterwards, we use again \texttt{leex} and \texttt{yecc} to process the language of terms of partial involutions: $T::=e\mid l(T)\mid r(T)\mid p(T,T)\mid x$, where $x$ represents a variable. We allow the user to input involution rules, using \verb|<->| to denote rewritings, as one would do with pencil and paper, in order to encode the combinators according to their representations (given in~\cite{Abr05}) as partial involutions.
After the parsing process, each combinator is internally represented as a list of rewriting clauses and it is only a matter of applying the application operation between partial involutions (\emph{i.e.} $f\cdot g$) in order to check that the combinator expressions are indeed equal, \emph{i.e.}, they generate the same involution.

For further details and for the implementation of the replication operator ($!$) we address the interested reader to the Appendix.

\section{Final Remarks and Directions for Future Work}\label{fin}
In this paper, we have analysed from the point of view of the model theory of  $\lambda$-calculus the combinatory algebras arising in \cite{Abr05}, consisting of partial involutions expressing the behaviour of reversible pattern-matching automata. We have shown that the last step of  ``Abramsky's Programme'', taking from a linear combinatory algebra to a $\lambda$-algebra is not immediate. Only in the strictly linear case we have a $\lambda$-algebra, but already in the strictly affine case the model of partial involutions cannot be immediately turned into a $\lambda$-algebra. To check the necessary equations, we have implemented in the language Erlang the application between partial involutions. In \cite{CDGHLS} we show how to define a suitable quotient to achieve a $\lambda$-algebra from the strictly affine combinatory algebra of partial involutions.

A key tool in analysing partial involutions interpreting combinators is the duality between involutions and the principal types, w.r.t. a suitable  type discipline, of the combinators expressed in the $\lambda$-calculus. This alternate characterization of the partial involutions yields also an answer to a question raised by S. Abramsky in \cite{Abr05}.
This has been  worked out in detail in the strictly affine case.

In~\cite{LC18,CDGHLS}, we generalize all the results in this paper to the full affine case. Namely, we
 define a  notion of affine $\lambda$-calculus, the $\lambda^!$-calculus, including terms $!M$, and two kinds of $\lambda$-abstractions, strictly affine and
!-abstraction, which abstract general $\lambda$-terms. The Abstraction Operation in Definition~\ref{op} and the Abstraction Theorem~\ref{absth}  can then be extended to the full affine case. A counterpart to Theorem~\ref{th:eqforSALA} can be proved for full affine $\lambda$-algebras.  Extending the type system to the $\lambda^!$-calculus requires
special care: a suitable intersection type system is needed, in the line of \cite{DGHL,DGL13}, where a $!_u$ type operator needs to be added. Then, the connection between partial involutions in the algebra $\cal P$ and principal type schemes, extending Theorem~\ref{ven}, can be stated. This latter result provides an answer to the full version of Abramsky's question along the lines of the procedure described in the present paper, for the strictly affine case.

An interesting problem to address is to characterize the fine theory of $\cal P$.

Finally, another challenging direction for future work could be to explore the potential connections with a vast part of the literature devoted to provide an implementation of GoI (see, e.g., \cite{Gonthier92,Mackie95}) and, in particular, reversible approaches (e.g., \cite{Danos99}).

\newpage
\begin{appendices}
\setcounter{chapter}{1}

\section{Implementing \textsf{LApp} in Erlang}\label{web-app}
In this Appendix we will outline in full details our Erlang~\cite{Erlang1,Erlang2} implementation of Abramsky's \textsf{LApp} operator. We recall from~\ref{erlang} that we built a lexical analyzer and a syntactical parser for the language of partial involutions in order to allow the user to write rewriting rules in a user-friendly way, namely, as strings like the following one (for combinator \textbf{B}):
\begin{verbatim}
"rrrX<->lrX, llX<->rlrX, rllX<->rrlX"
\end{verbatim}
Once parsed, the previous string will yield an internal representation as a list  of pairs: each pair is a directed (\texttt{->}) rewriting rule (hence, there are twice the rules for each clause in the original string, i.e., one for each rewriting direction). Thus, combinator \textbf{B} will be represented internally by the following list of pairs:
\begin{verbatim}
[{{r,{r,{r,{var,"X"}}}},{l,{r,{var,"X"}}}},
 {{l,{r,{var,"X"}}},{r,{r,{r,{var,"X"}}}}},
 {{l,{l,{var,"X"}}},{r,{l,{r,{var,"X"}}}}},
 {{r,{l,{r,{var,"X"}}}},{l,{l,{var,"X"}}}},
 {{r,{l,{l,{var,"X"}}}},{r,{r,{l,{var,"X"}}}}},
 {{r,{r,{l,{var,"X"}}}},{r,{l,{l,{var,"X"}}}}}]
\end{verbatim}
In this appendix all the functions we will introduce will operate on the internal representation (since it is more convenient for coding purposes); hence, we will not describe the lexical analyzer and the parser which are automatically generated from the grammar specification of the language of partial involutions, using \texttt{leex} and \texttt{yecc} (i.e., the Erlang versions of the well known tools Lex and Yacc).

\subsection{Auxiliary functions}
First of all, we begin by introducing some basic auxiliary functions allowing us to deal smoothly with tasks such as eliminating duplicates from a list, extracting variables occurring in a given term/list, generating a \emph{fresh} variable etc.

\subsubsection{Removing duplicates from a list}\label{app:subsubsec:dd}
The recursive definition is self-explanatory once we notice that \texttt{member(H,T)} is the library function returning true if and only if \texttt{H} is a member of \texttt{T}.
\begin{lstlisting}
dd(L) ->
  case L of
    [] -> [];
    [H|T] ->
      Tail=dd(T),
      Exists=lists:member(H,T),
      if
        Exists -> Tail;
        true -> [H]++Tail
      end
  end.
\end{lstlisting}
The \texttt{true} clause corresponds to the \emph{else} branch of the conditional command in common imperative languages.

\subsubsection{Computing the list of variables occurring in terms and rewriting rule sets}\label{app:subsubsec:vars}
Given a term \texttt{T} built starting from variables and constructors of the language of partial involutions (namely, $\epsilon$, $r$, $l$, and $p$), the following function definition allows us to recursively compute the list of variables occurring into \texttt{T}:

\begin{lstlisting}
vars(T) ->
  case T of
    e -> [];
    {var, X} -> [X];
    {l,U} -> vars(U);
    {r,U} -> vars(U);
    {p,U1,U2} -> V1=vars(U1), V2=vars(U2),
                 if
                   V1==[] -> V2;
                   V2==[] -> V1;
                   true -> dd(V1++V2)
                 end
  end.
\end{lstlisting}
Notice the use of the \emph{dd} function defined in Section~\ref{app:subsubsec:dd}, in order to avoid duplicates in the \texttt{p} constructor case.

We then use \texttt{vars} in order to specify the function \texttt{ruleListVars} which returns the list of variables occurring in the list of rewriting rules \texttt{L} passed as a parameter:
\begin{lstlisting}
ruleListVars(L) ->
  case L of
    [] -> [];
    [{R1,R2}|Tail] -> dd(vars(R1)++vars(R2)++ruleListVars(Tail))
  end.
\end{lstlisting}

\subsubsection{Generating fresh variables to avoid variables clashes}\label{app:subsubsec:fresh}
In order to avoid clashes with variables of different rewriting rule sets having the same name, we need a fresh renaming mechanism. The following function definition accepts a list of variables \texttt{L} as a parameter and returns a variable not occurring in \texttt{L}:
\begin{lstlisting}
fresh(L) ->
  case L of
    [] -> "_X1";
    [Head|Tail] -> Var=fresh(Tail),
                   Len=length(Head),

                   if
                     Len>=2 -> Prefix=string:substr(Head, 1, 2),
                               if
                                 Prefix=="_X" -> {HeadId,_}=string:to_integer(string:substr(Head, 3, Len)),
                                 if
                                   HeadId==error -> Var;
                                   true -> {VarId,_}=string:to_integer(string:substr(Var, 3, length(Var))),
                                           if
                                             VarId>HeadId -> Var;
                                             true -> lists:flatten(io_lib:format("_X~p", [HeadId+1]))
                                           end
                                 end;
                                 true -> Var
                               end;
                     true -> Var
                   end
  end.
\end{lstlisting}
Automatically generated variables have names of the following shape: \texttt{\_Xn}, where \texttt{n} is an integer. Thus, they can be easily identified.

The \texttt{fresh} function is used in the definition of \texttt{separateVars} which returns a list of substitutions in order to separate variables occurring in \texttt{Vars1} from those occurring in \texttt{Vars2}:
\begin{lstlisting}
separateVars(Vars1,Vars2) ->
  case Vars1 of
    [] -> [];
    [Var|Tail] -> Check=lists:member(Var,Vars2),
                  if
                    Check -> NewVar=fresh(Vars1++Vars2),
                             [{Var,{var, NewVar}}]++separateVars(Tail,[Var,NewVar]++Vars2);
                    true -> separateVars(Tail,[Var|Vars2])
                  end
  end.
\end{lstlisting}
The attentive reader should notice that the new variable \texttt{NewVar} is chosen fresh w.r.t. both \texttt{Vars1} and \texttt{Vars2}. Moreover, we append both \texttt{Var} and \texttt{NewVar} to the second argument in the recursive calls; hence, the whole mechanism will work even after previous \emph{fresh renamings}.

\subsubsection{Substitutions}\label{app:subsubsec:subst}
In this section we deal with various kinds of substitutions. We start with a function returning \texttt{T[Y/X]}, i.e., the result of substituting \texttt{Y} for \texttt{X} in \texttt{T}:
\begin{lstlisting}
subTerm(X,Y,T) ->
  case T of
    e -> e;
    {var,V} -> if
                 X == V -> Y;
                 true -> {var, V}
               end;
    {l,U} -> {l,subTerm(X,Y,U)};
    {r,U} -> {r,subTerm(X,Y,U)};
    {p,P1,P2} -> {p,subTerm(X,Y,P1),subTerm(X,Y,P2)}
  end.
\end{lstlisting}
Then, \texttt{subTerm} is used to implement two other types of multiple substitutions, namely, \texttt{subListTerm} where a list \texttt{L} of substitutions \texttt{[T1/X1]},\ldots,\texttt{[Tn/Xn]} is applied to a term \texttt{U}, yielding as a result \texttt{(\ldots(U[T1/X1])\ldots[Tn/Xn])}:
\begin{lstlisting}
subListTerm(L,U) ->
  case L of
    [] -> U;
    [{X,Y} | Tail] -> subListTerm(Tail,subTerm(X,Y,U))
  end.
\end{lstlisting}
and \texttt{subList} which is a function allowing one to substitute \texttt{Y} for \texttt{X} in the codomain of a list \texttt{L} of substitutions:
\begin{lstlisting}
subList(X,Y,L) ->
  case L of
    [] -> [];
    [{Z,U}|T] -> [{Z,subTerm(X,Y,U)} | subList(X,Y,T)]
  end.
\end{lstlisting}
The last form of substitution we need is the one implemented by \texttt{subListRuleset} (built on top of the previously defined \texttt{subListTerm}):
\begin{lstlisting}
subListRuleset(Subst,Ruleset) ->
  case Ruleset of
    [] -> [];
    [{R1,R2}|Tail] -> [{subListTerm(Subst,R1),subListTerm(Subst,R2)}]++subListRuleset(Subst,Tail)
  end.
\end{lstlisting}
in this case we are applying a list of substitutions represented by the parameter \texttt{Subst} to all the terms occurring in the rewriting rules of the \texttt{Ruleset} parameter.

\subsection{Robinson's unification algorithm}\label{app:subsec:robinson}
The unification algorithm, originally conceived by Robinson, is essential in order to apply in sequence rewriting rules:
\begin{lstlisting}
unify(X,Y,L) ->
  case X of
    e -> case Y of
           e -> {ok,L};
           {var,_} -> unify(Y,X,L);
           _ -> {fail,[]}
         end;
    {var,V} -> Check=not(lists:member(V,vars(Y))),
               if
                 Check -> {ok,dd(lists:append(subList(V,Y,L),[{V,Y}]))};
                 true -> {fail,[]}
               end;
    {l,U} -> case Y of
               {var,_} -> unify(Y,X,L);
               {l,V} -> unify(U,V,L);
               _ -> {fail,[]}
             end;
    {r,U} -> case Y of
               {var,_} -> unify(Y,X,L);
               {r,V} -> unify(U,V,L);
               _ -> {fail,[]}
             end;
    {p,P1,P2} -> case Y of
                   {var,_} -> unify(Y,X,L);
                   {p,Q1,Q2} -> {Flag1,R1}=unify(P1,Q1,L),
                                if
                                  (Flag1==ok) -> {Flag2,R2}=unify(subListTerm(R1,P2),subListTerm(R1,Q2),R1),
                                                 if
                                                   (Flag2==ok) -> {ok,R2};
                                                   true -> {fail,[]}
                                                 end;
                                  true -> {fail,[]}
                                end;
                   _ -> {fail,[]}
                 end
  end.
\end{lstlisting}
The intended meaning of \texttt{unify(X,Y,L)} is that the list \texttt{L} will be enriched by the substitutions needed to unify \texttt{X} and \texttt{Y}. Hence, if we start with the empty list, we will obtain the m.g.u. of \texttt{X} and \texttt{Y}. More precisely, the returned value is either a pair \texttt{\{ok,Mgu\}} in case of success, or a pair \texttt{\{fail,[]\}} if \texttt{X} and \texttt{Y} are not unifiable.

\subsection{Implementing \textsf{LApp}}\label{app:subsec:lapp}
As originally introduced by Abramsky, the definition of $\mathsf{LApp}(f,g)$ is given by $f_{rr}\cup f_{rl};g;(f_{ll};g)^*;f_{lr}$, where $f_{ij}=\{(u,v)\mid (i(u),j(v))\in f\}$ for $i,j\in\{l,r\}$. Hence, the ``flow of control'' of \textsf{LApp} can be graphically represented by the following diagram:
\[\xymatrix{
\mathsf{in}\ar[r] & \bullet\ar[r]^{f_{rr}}\ar[d]_{f_{rl}} & \bullet\ar[r] & \mathsf{out}\\
                  & \bullet\ar@<.5ex>[r]^{g} & \bullet\ar@<.5ex>[l]^{f_{ll}}\ar[u]_{f_{lr}} &
}
\]
Hence, we must begin programming a function named \texttt{extract}, being able to deduce l- and r- rewriting rules from \texttt{L}, according to \texttt{Op1} and \texttt{Op2}:
\begin{lstlisting}
extract(L,Op1,Op2) ->
  case L of
    [] -> [];
    [{e,_}|T] -> extract(T,Op1,Op2);
    [{_,e}|T] -> extract(T,Op1,Op2);
    [{{p,_,_},_}|T] -> extract(T,Op1,Op2);
    [{_,{p,_,_}}|T] -> extract(T,Op1,Op2);
    [{T1,T2}|T] -> {O1,S1}=T1,
                   {O2,S2}=T2,
                   if
                     (O1==Op1) and (O2==Op2) -> [{S1,S2} | extract(T,Op1,Op2)];
                     true -> extract(T,Op1,Op2)
                   end
  end.
\end{lstlisting}
Thus, if \texttt{F} represents a partial involution, then \texttt{extract(F,r,l)} will compute \texttt{F}$_{rl}$.

Then, we have the \emph{core} function \texttt{composeRuleList} which composes rule \texttt{R1}$\rightarrow$\texttt{R2} with all the rules in \texttt{L} (exploiting the unification and substitution functions defined in the previous sections):
\begin{lstlisting}
composeRuleList(R1,R2,L) ->
  case L of
    [] -> [];
    [{S1,S2}|T] -> {ExitStatus,MGU}=unify(R2,S1,[]),
                   if
                     (ExitStatus==ok) -> [{subListTerm(MGU,R1),subListTerm(MGU,S2)} | composeRuleList(R1,R2,T)];
                     true -> composeRuleList(R1,R2,T)
                   end
  end.
\end{lstlisting}

In order to avoid possible variable names clashes, the \texttt{alpha} function defined below replaces all variables in Ruleset1 which also occur in Ruleset2 with freshly generated ones:
\begin{lstlisting}
alpha(Ruleset1,Ruleset2) ->
  Vars1=ruleListVars(Ruleset1),
  Vars2=ruleListVars(Ruleset2),
  FreshSubst=separateVars(Vars1,Vars2),
  subListRuleset(FreshSubst,Ruleset1).
\end{lstlisting}

\texttt{alpha} is fruitfully used in the definition of \texttt{compose} which computes all possible chainings between rewriting rules of \texttt{L1} and \texttt{L2}:
\begin{lstlisting}
compose(L1,L2) ->
  L1_Fresh=alpha(L1,L2),
  compose_fresh(L1_Fresh,L2).

compose_fresh(L1_Fresh,L2) ->
  case L1_Fresh of
    [] -> [];
    [H1|T1] -> {R1,R2}=H1,
               composeRuleList(R1,R2,L2)++compose_fresh(T1,L2)
  end.
\end{lstlisting}

The funtion \texttt{star} will capitalize on the definition of \texttt{compose}, in order to implement the computation of $\mathtt{H};(\mathtt{F};\mathtt{G})^*$:
\begin{lstlisting}
star(H,F,G) ->
  S=compose(H,F),
  if
    S==[] -> H;
    true -> T=compose(S,G),
            if
              T==[] -> H;
              true -> H++star(T,F,G)
            end
  end.
\end{lstlisting}

So far, the definition of our implementation of \textsf{LApp}, according to Abramsky's specification, is straightforward:
\begin{lstlisting}
lapp(F,G) ->
    FRR=extract(F,r,r),
    FRL=extract(F,r,l),
    FLL=extract(F,l,l),
    FLR=extract(F,l,r),
    FRL_G=compose(FRL,G),
    FRL_G_STAR=star(FRL_G,FLL,G),
    FRR++compose(FRL_G_STAR,FLR).
\end{lstlisting}
In order to avoid an excessive nesting of \texttt{lapp} applications when dealing with complicate expressions, we also implemented the following functions:
\begin{lstlisting}
chainApp(Rulesets) ->
  Rev=lists:reverse(Rulesets),
  chain(Rev).

chain(Rulesets) ->
  case Rulesets of
    [] -> [];
    [R] -> R;
    [R|Tail] -> Rec=chain(Tail),
                lapp(Rec,R)
  end.
\end{lstlisting}
Thus, we can delegate to \texttt{chainApp} the task of appropriately nesting the calls to \texttt{lapp}; indeed, the value computed by \texttt{chainApp([R1,R2,...,Rn])} is the rule set given by \texttt{lapp((...lapp(} \texttt{R1,R2)...),Rn)}.

\subsection{Implementing replication}\label{app:subsec:bang}
The last operator we need is replication: $!f=\{(\langle t,u \rangle,\langle t,v\rangle)\mid t\in T_{\Sigma} \wedge (u,v)\in f\}$. Its implementation in Erlang is straightforward:
\begin{lstlisting}
bang(F) ->
  Vars=ruleListVars(F),
  X=fresh(Vars),
  bang_rec(F,X).

bang_rec(F,X) ->
  case F of
    [] -> [];
    [{R1,R2} | Tail] -> [{{p,{var,X},R1},{p,{var,X},R2}}]++bang_rec(Tail,X)
  end.
\end{lstlisting}
Notice how the fresh variable \texttt{X} plays the role of the generic term $t$ in the original definition. Indeed, being a new variable, \texttt{X} can be unified with every possible term.

\section{From $\lambda$-terms to combinators}\label{sec:lambda-comb}
In this section we will illustrate the details of the functions needed to transform plain $\lambda-terms$ to expressions involving only combinators.

\subsection{Auxiliary functions}

\subsubsection{Renaming automatic variables}\label{subsubsec:shift}

In order to do not interfere with variable names assigned by users, we reserve identifiers of the shape \texttt{\_Xi} (where \texttt{i} is a natural number $\geq 1$) for automatic variables (i.e., variables generated in a programmatic way). Thus, we can easily rename such variables by increasing the index \texttt{i}. This is precisely the purpose of the function \texttt{shift} which increases the index of the first argument by the amount specified by the second argument:

\begin{lstlisting}
shift(Var,I) ->
  Len=length(Var),
  if
    Len>=2 ->
      Prefix=string:substr(Var, 1, 2),
      if
        Prefix=="_X" ->
          {VarId,_}=string:to_integer(string:substr(Var, 3, Len)),
          if
            VarId==error -> Var;
            true -> lists:flatten(io_lib:format("_X~p", [VarId+I]))
          end;
        true -> Var
      end;
    true -> Var
  end.
\end{lstlisting}

\subsubsection{Function computing the substitution needed to rename (with \texttt{Fv}, \texttt{shift} \texttt{(Fv,1)}, \texttt{shift(Fv,2)}, \ldots) all variables occurring in a list of variables}

The function \texttt{compute\_renaming} uses the previously defined function \texttt{shift} (see Section~\ref{subsubsec:shift}), in order to transform a list of variables \verb|[X, Y, Z, ...]| into \verb|[Fv,| \verb|shift(Fv,1), shift(Fv,2), ...]|.

\begin{lstlisting}
compute_renaming(Vars,Fv) ->
    case Vars of
      [] -> [];
      [V|Tail] -> [{V,{var,Fv}} | compute_renaming(Tail,shift(Fv,1))]
    end.
\end{lstlisting}

\subsubsection{Function renaming (with \texttt{Fv}, \texttt{shift(Fv,1)}, \texttt{shift(Fv,2)}, \ldots) all variables occurring in the ruleset \texttt{R}}

Using \texttt{compute\_renaming} we can proceed further with the renaming of all the variables occurring in a ruleset by \texttt{Fv}, \texttt{shift(Fv,1)}, \texttt{shift(Fv,2)}, \ldots.

\begin{lstlisting}
polish_rules(R,Fv) ->
    case R of
      [] -> [];
      [{R1,R2}|Tail] -> Vars=ruleListVars([{R1,R2}]),
                        Subst=compute_renaming(Vars,Fv),
                        [{subListTerm(Subst,R1),subListTerm(Subst,R2)} | polish_rules(Tail,Fv)]
    end.
\end{lstlisting}

\subsubsection{Polishing a ruleset \texttt{R}}

The following function generates a new fresh variable and it uses the latter as a basis for renaming all variables of the ruleset \texttt{R}, removing also possible duplicate clauses.

\begin{lstlisting}
polish(R) ->
    Vars=ruleListVars(R),
    Fv=fresh(Vars),
    dd(polish_rules(R,Fv)).
\end{lstlisting}

\subsubsection{Deciding if two rulesets are equivalent}

So far, it is rather straightforward to check if two given rulesets \texttt{R1} and \texttt{R2} are equivalent:
\begin{enumerate}
\item choose a fresh variable w.r.t. both variables in \texttt{R1}, \texttt{R2};
\item polish both rulesets, removing possible duplicates;
\item test if the results are equal.
\end{enumerate}

\begin{lstlisting}
equiv(R1,R2) ->
  Vars1=ruleListVars(R1),
  Vars2=ruleListVars(R2),
  Fv=fresh(Vars1++Vars2),
  P1=dd(polish_rules(R1,Fv)),
  P2=dd(polish_rules(R2,Fv)),
  lists:sort(P1) =:= lists:sort(P2).
\end{lstlisting}

\subsubsection{Computing the list of the free variables of a $\lambda$-term}

Our last auxiliary function computes the list of the free variables occurring in a $\lambda$-term.
\begin{lstlisting}
varsLambda(Lambda) ->
  case Lambda of
  	{var, X} -> [X];
    {comb, _} -> [];
    {lapp, M, N} -> dd(varsLambda(M) ++ varsLambda(N));
    {abs, {var, X}, M} -> lists:delete(X,varsLambda(M));
    {bang, M} -> varsLambda(M)
  end.
\end{lstlisting}

\subsection{Abstraction operator}
The implementation of the \emph{Abstraction Operator}, extended to the full $\lambda^!$-calculus, allows the user to ``translate'' a $\lambda^!$-term into an expression built using only combinators. This is precisely the purpose of \texttt{abstract}:

\begin{lstlisting}[
    basicstyle=\tiny,
]
abstract(Lambda) ->
  case Lambda of
    %%%%%%%%%%%%%%%%%%%%%%
    % lambda*! X.C -> KC %
    %%%%%%%%%%%%%%%%%%%%%%
    {abs_b, {var, _}, {comb, C}} -> {lapp, {comb, "K"}, {comb, C}};
    %%%%%%%%%%%%%%%%%%%%%%%%%%%%%%%%
    % lambda*! X.Y -> D (if X=Y),  %
    % lambda*! X.Y -> KY otherwise %
    %%%%%%%%%%%%%%%%%%%%%%%%%%%%%%%%
    {abs_b, {var, X}, {var, Y}} ->
      if
        X == Y -> {comb, "D"};
        true -> {lapp, {comb, "K"}, {var, Y}}
      end;
    %%%%%%%%%%%%%%%%%%%%%%%%%%
    % lambda*! X.!X -> F(!I) %
    %%%%%%%%%%%%%%%%%%%%%%%%%%
    {abs_b, {var, X}, {bang, {var, X}}} -> {lapp, {comb, "F"}, {bang, {comb, "I"}}};
    %%%%%%%%%%%%%%%%%%%%%%%%%%%%%%%%%%%%%%%%%%%%%%%%%%%%%%%%%%%%%%%%%%%%%%%%%%
    % lambda*! X.MN -> C(lambda* X.M)N (if X in FV(M) and X not in FV(N))    %
    % lambda*! X.MN -> BM(lambda* X.N) (if X not in FV(M) and X in FV(N))    %
    % lambda*! X.MN -> W(C(BB(lambda*! X.M))(lambda*! X.N)) (if X in FV(M)   %
    %                                                        and X in FV(N)) %
    % lambda*! X.MN -> K(MN) otherwise                                       %
    %%%%%%%%%%%%%%%%%%%%%%%%%%%%%%%%%%%%%%%%%%%%%%%%%%%%%%%%%%%%%%%%%%%%%%%%%%
    {abs_b, {var, X}, {lapp, M, N}} ->
      CheckM=lists:member(X,varsLambda(M)),
      CheckN=lists:member(X,varsLambda(N)),
      if
        CheckM and not(CheckN) ->
          {lapp, {lapp, {comb, "C"}, abstract({abs_b, {var, X}, M})}, N};
        not(CheckM) and CheckN ->
          {lapp, {lapp, {comb, "B"}, M}, abstract({abs_b, {var, X}, N})};
        CheckM and CheckN ->
          {lapp, {comb, "W"},
                 {lapp, {lapp, {comb, "C"},
                               {lapp, {lapp, {comb, "B"}, {comb, "B"} },
                                      abstract({abs_b, {var, X}, M})
                               }
                        },
                        abstract({abs_b, {var, X}, N})
                 }
          };
        true -> {lapp, {comb, "K"}, {lapp, M, N}}
      end;
    %%%%%%%%%%%%%%%%%%%%%%%%%%%%%%%%%%%%%%%%%%%%%%%%%%%%%%
    % lambda*! X.!M -> B(F(!lambda*!X.M))delta (if M<>X) %
    %%%%%%%%%%%%%%%%%%%%%%%%%%%%%%%%%%%%%%%%%%%%%%%%%%%%%%
    {abs_b, {var, X}, {bang, M}} when M /= {var, X} ->
      {lapp, {lapp, {comb, "B"}, {lapp, {comb, "F"}, {bang, abstract({abs_b, {var, X}, M})}}}, {comb, "d"}};
    %%%%%%%%%%%%%%%%%%%%%
    % lambda* X.C -> KC %
    %%%%%%%%%%%%%%%%%%%%%
	{abs, {var, _}, {comb, C}} -> {lapp, {comb, "K"}, {comb, C}};
    %%%%%%%%%%%%%%%%%%%%%%%%%%%%%%%
    % lambda* X.Y -> I (if X=Y),  %
    % lambda* X.Y -> KY otherwise %
    %%%%%%%%%%%%%%%%%%%%%%%%%%%%%%%
    {abs, {var, X}, {var, Y}} ->
      if
        X == Y -> {comb, "I"};
        true -> {lapp, {comb, "K"}, {var, Y}}
      end;
    %%%%%%%%%%%%%%%%%%%%%%%%%%%%%%%%%%%%%%%%%%%%%%%%%%%%%%%%%%%%%%%
    % lambda* X.MN -> C(lambda* X.M)N (if X in FV(M))             %
    % lambda* X.MN -> BM(lambda* X.N) (if X in FV(N))             %
    % lambda* X.MN -> K(MN) otherwise                             %
    %%%%%%%%%%%%%%%%%%%%%%%%%%%%%%%%%%%%%%%%%%%%%%%%%%%%%%%%%%%%%%%
    {abs, {var, X}, {lapp, M, N}} ->
      CheckM=lists:member(X,varsLambda(M)),
      if
        CheckM -> {lapp, {lapp, {comb, "C"}, abstract({abs, {var, X}, M})}, N};
        true ->
          CheckN=lists:member(X,varsLambda(N)),
          if
            CheckN -> {lapp, {lapp, {comb, "B"}, M}, abstract({abs, {var, X}, N})};
            true -> {lapp, {comb, "K"}, {lapp, M, N}}
          end
      end;
    %%%%%%%%%%%%%%%%%%%%%%%%%%%%%%%%%%%%%%%%
    % Structural abstraction rules (begin) %
    %%%%%%%%%%%%%%%%%%%%%%%%%%%%%%%%%%%%%%%%
    {abs, {var, X}, {abs, {var, Y}, M}} ->
      N=abstract({abs, {var, Y}, M}),
      abstract({abs, {var, X}, N});
    {abs, {var, X}, {abs_b, {var, Y}, M}} ->
      N=abstract({abs_b, {var, Y}, M}),
      abstract({abs, {var, X}, N});
    {abs_b, {var, X}, {abs, {var, Y}, M}} ->
      N=abstract({abs, {var, Y}, M}),
      abstract({abs_b, {var, X}, N});
    {abs_b, {var, X}, {abs_b, {var, Y}, M}} ->
      N=abstract({abs_b, {var, Y}, M}),
      abstract({abs_b, {var, X}, N});
    %%%%%%%%%%%%%%%%%%%%%%%%%%%%%%%%%%%%%%
    % Structural abstraction rules (end) %
    %%%%%%%%%%%%%%%%%%%%%%%%%%%%%%%%%%%%%%
    %%%%%%%%%%%%%%%%%%%%%%%%%%%%%%%%
    % If nothing else applies, ... %
    %%%%%%%%%%%%%%%%%%%%%%%%%%%%%%%%
    M -> M
  end.
\end{lstlisting}

\section{Combinators as partial involutions}

\subsection{Standard Combinators}
The purpose of the following function is to return a list whose members are the internal representation as partial involutions of the standard combinators of Combinatory Logic.

\begin{lstlisting}
export_combinators() ->
  % Combinators as sets of rewriting rules.
  String_I="lX<->rX",
  String_B="rrrX<->lrX, llX<->rlrX, rllX<->rrlX",
  String_K="lX<->rrX",
  String_C="llX<->rrlX, lrlX<->rlX, lrrX<->rrrX",
  String_D="l<e,X><->rX",
  String_F="l<X,rY><->rr<X,Y>, l<X,lY><->rl<X,Y>",
  String_W="rrX<->lrrX, ll<X,Y><->rl<lX,Y>, lrl<X,Y><->rl<rX,Y>",
  String_Delta="l<<X,Y>,Z><->r<X,<Y,Z>>",
  String_B1="rrrX<->lrX, llX<->rlrX, rll<X,Y><->rrl<X,Y>",

  % Lexing and parsing.
  {ok,Tokens_I,1}=abramsky_lexer:string(String_I),
  {ok,I}=abramsky_parser:parse(Tokens_I),
  {ok,Tokens_B,1}=abramsky_lexer:string(String_B),
  {ok,B}=abramsky_parser:parse(Tokens_B),
  {ok,Tokens_K,1}=abramsky_lexer:string(String_K),
  {ok,K}=abramsky_parser:parse(Tokens_K),
  {ok,Tokens_C,1}=abramsky_lexer:string(String_C),
  {ok,C}=abramsky_parser:parse(Tokens_C),
  {ok,Tokens_D,1}=abramsky_lexer:string(String_D),
  {ok,D}=abramsky_parser:parse(Tokens_D),
  {ok,Tokens_F,1}=abramsky_lexer:string(String_F),
  {ok,F}=abramsky_parser:parse(Tokens_F),
  {ok,Tokens_W,1}=abramsky_lexer:string(String_W),
  {ok,W}=abramsky_parser:parse(Tokens_W),
  {ok,Tokens_Delta,1}=abramsky_lexer:string(String_Delta),
  {ok,Delta}=abramsky_parser:parse(Tokens_Delta),
  {ok,Tokens_B1,1}=abramsky_lexer:string(String_B1),
  {ok,B1}=abramsky_parser:parse(Tokens_B1),

  % Returning the list
  [{"I",I},{"B",B},{"K",K},{"C",C},{"D",D},{"F",F},{"W",W},{"d",Delta},{"B1",B1}].
\end{lstlisting}

\subsection{Decoding combinators expressions into partial involutions}
The result yielded by \texttt{export\_combinators} can be fed to \texttt{decode} in order to
transform a combinatory logic term into linear and bang applications (yielding a partial involution as the final result).

\begin{lstlisting}
decode(TComb,Combinators) ->
  case TComb of
    {comb, Comb} -> Check=lists:keyfind(Comb,1,Combinators),
                    if
                      Check==false -> [];
                      true -> element(2,Check)
                    end;
    {lapp, M, N} -> lapp(decode(M,Combinators),decode(N,Combinators));
    {bang, M} -> bang(decode(M,Combinators))
  end.
\end{lstlisting}

\section{Testing and pretty printing}

In this section we describe some auxiliary functions whose purpose it to ease the automation of tests (i.e., the equivalence checks between partial involutions), and to pretty print the results to the screen.

\subsection{Pretty printing terms}

\begin{lstlisting}
pretty_print_term(T) ->
  case T of
    e -> io:format("e");
    {var, V} -> io:format(V);
    {l,T1} -> io:format("l("), pretty_print_term(T1), io:format(")");
    {r,T1} -> io:format("r("), pretty_print_term(T1), io:format(")");
    {p,T1,T2} -> io:format("<"), pretty_print_term(T1),io:format(", "), pretty_print_term(T2), io:format(">")
  end.
\end{lstlisting}

\subsection{Pretty printing lists of rewriting rules}

\begin{lstlisting}
pretty_print_rules(L) ->
  case L of
    [] -> io:format("--- end ---~n");
    [{R1,R2} | T] -> pretty_print_term(R1), io:format(" -> "), pretty_print_term(R2), io:format("~n"), pretty_print_rules(T)
  end.
\end{lstlisting}

\subsection{Pretty printing lambda terms}

\begin{lstlisting}
pretty_print_lambda(T) ->
  case T of
    {var, X} -> io:format(X);
    {comb, C} -> io:format(C);
    {lapp, M, N} -> io:format("("),pretty_print_lambda(M),pretty_print_lambda(N),io:format(")");
    {abs, {var, X}, M} -> io:format("l* "),io:format(X),io:format("."),pretty_print_lambda(M);
    {abs_b, {var, X}, M} -> io:format("l*! "),io:format(X),io:format("."),pretty_print_lambda(M);
    {bang, M} -> io:format("!("),pretty_print_lambda(M),io:format(")")
  end.
\end{lstlisting}

\subsection{Checking equivalence of partial involutions}

The function \texttt{test} in the following returns \texttt{true} if and only if \texttt{Lambda\_string1} and \texttt{Lambda\_string2} are equivalent $\lambda^!$-expressions.

\begin{lstlisting}
test(Lambda_string1,Lambda_string2,Combinators) ->
  {ok,Tokens_LambdaT1,1}=lambda_lexer:string(Lambda_string1),
  {ok,LambdaT1}=lambda_parser:parse(Tokens_LambdaT1),
  Comb_LambdaT1=abstract(LambdaT1),
  io:format("--- "),pretty_print_lambda(Comb_LambdaT1),io:format(" ---~n"),
  Test1=decode(Comb_LambdaT1,Combinators),
  pretty_print_rules(Test1),
  {ok,Tokens_LambdaT2,1}=lambda_lexer:string(Lambda_string2),
  {ok,LambdaT2}=lambda_parser:parse(Tokens_LambdaT2),
  Comb_LambdaT2=abstract(LambdaT2),
  io:format("--- "),pretty_print_lambda(Comb_LambdaT2),io:format(" ---~n"),
  Test2=decode(Comb_LambdaT2,Combinators),
  pretty_print_rules(Test2),
  equiv(Test1,Test2).
\end{lstlisting}

\subsection{Converting and pretty printing $\lambda^!$-terms}
The following function prints to the screen the combinators expression and  the rewriting rules corresponding to the $\lambda^!$-expression \texttt{LambdaString}.

\begin{lstlisting}
show(LambdaString) ->
  Combinators=export_combinators(),
  {ok,Tokens,1}=lambda_lexer:string(LambdaString),
  {ok,T}=lambda_parser:parse(Tokens),
  Comb_T=abstract(T),
  pretty_print_lambda(Comb_T), io:format("~n"),
  pretty_print_rules(polish(decode(Comb_T,Combinators))).
\end{lstlisting}

\section{Running example}
In order to give an idea of how to use all the machinery so far introduced, we will consider an example session at the Erlang console.

Let us see how to prove that equation $\lambda^*xyz.C(C(BBx)y)z = \lambda^*xyz.Cx(yz)$ from Theorem~\ref{th:eqforSALA} holds:

\begin{lstlisting}
1> Combinators=abramsky:export_combinators().
[{"I",
  [{{l,{var,"X"}},{r,{var,"X"}}},
   {{r,{var,"X"}},{l,{var,"X"}}}]},
 {"B",
  [{{r,{r,{r,{var,"X"}}}},{l,{r,{var,"X"}}}},
   {{l,{r,{var,"X"}}},{r,{r,{r,{var,"X"}}}}},
   {{l,{l,{var,"X"}}},{r,{l,{r,{var,"X"}}}}},
   {{r,{l,{r,{var,"X"}}}},{l,{l,{var,"X"}}}},
   {{r,{l,{l,{var,"X"}}}},{r,{r,{l,{var,"X"}}}}},
   {{r,{r,{l,{var,"X"}}}},{r,{l,{l,{var,"X"}}}}}]},
 {"K",
  [{{l,{var,"X"}},{r,{r,{var,"X"}}}},
   {{r,{r,{var,"X"}}},{l,{var,"X"}}}]},
 {"C",
  [{{l,{l,{var,"X"}}},{r,{r,{l,{var,"X"}}}}},
   {{r,{r,{l,{var,"X"}}}},{l,{l,{var,"X"}}}},
   {{l,{r,{l,{var,"X"}}}},{r,{l,{var,"X"}}}},
   {{r,{l,{var,"X"}}},{l,{r,{l,{var,"X"}}}}},
   {{l,{r,{r,{var,"X"}}}},{r,{r,{r,{var,"X"}}}}},
   {{r,{r,{r,{var,"X"}}}},{l,{r,{r,{var,"X"}}}}}]},
 {"D",
  [{{l,{p,e,{var,"X"}}},{r,{var,"X"}}},
   {{r,{var,"X"}},{l,{p,e,{var,"X"}}}}]},
 {"F",
  [{{l,{p,{var,"X"},{r,{var,"Y"}}}},
    {r,{r,{p,{var,"X"},{var,"Y"}}}}},
   {{r,{r,{p,{var,"X"},{var,"Y"}}}},
    {l,{p,{var,"X"},{r,{var,"Y"}}}}},
   {{l,{p,{var,"X"},{l,{var,"Y"}}}},
    {r,{l,{p,{var,"X"},{var,"Y"}}}}},
   {{r,{l,{p,{var,"X"},{var,"Y"}}}},
    {l,{p,{var,"X"},{l,{var,"Y"}}}}}]},
 {"W",
  [{{r,{r,{var,"X"}}},{l,{r,{r,{var,"X"}}}}},
   {{l,{r,{r,{var,"X"}}}},{r,{r,{var,"X"}}}},
   {{l,{l,{p,{var,"X"},{var,"Y"}}}},
    {r,{l,{p,{l,{var,"X"}},{var,"Y"}}}}},
   {{r,{l,{p,{l,{var,"X"}},{var,"Y"}}}},
    {l,{l,{p,{var,"X"},{var,"Y"}}}}},
   {{l,{r,{l,{p,{var,"X"},{var,"Y"}}}}},
    {r,{l,{p,{r,{var,"X"}},{var,"Y"}}}}},
   {{r,{l,{p,{r,{var,"X"}},{var,"Y"}}}},
    {l,{r,{l,{p,{var,"X"},{var,"Y"}}}}}}]},
 {"d",
  [{{l,{p,{p,{var,"X"},{var,"Y"}},{var,"Z"}}},
    {r,{p,{var,"X"},{p,{var,"Y"},{var,"Z"}}}}},
   {{r,{p,{var,"X"},{p,{var,"Y"},{var,"Z"}}}},
    {l,{p,{p,{var,"X"},{var,"Y"}},{var,"Z"}}}}]},
 {"B1",
  [{{r,{r,{r,{var,"X"}}}},{l,{r,{var,"X"}}}},
   {{l,{r,{var,"X"}}},{r,{r,{r,{var,"X"}}}}},
   {{l,{l,{var,"X"}}},{r,{l,{r,{var,"X"}}}}},
   {{r,{l,{r,{var,"X"}}}},{l,{l,{var,"X"}}}},
   {{r,{l,{l,{p,{var,"X"},{var,"Y"}}}}},
    {r,{r,{l,{p,{var,"X"},{var,[...]}}}}}},
   {{r,{r,{l,{p,{var,"X"},{var,[...]}}}}},
    {r,{l,{l,{p,{var,[...]},{var,...}}}}}}]}]
2> LambdaT1_string="l* X.l* Y. l* Z.C@(C@(B@B@X)@Y)@Z". (*@\label{code:example:LambdaT1_string}@*)
"l* X.l* Y. l* Z.C@(C@(B@B@X)@Y)@Z"
3> LambdaT2_string="l* X.l* Y. l* Z.C@X@(Y@Z)".
"l* X.l* Y. l* Z.C@X@(Y@Z)" (*@\label{code:example:LambdaT2_string}@*)
4> io:format("Test: lambda*xyz.C(C(BBx)y)z=lambda*xyz.Cx(yz): ~p~n",[abramsky:test(LambdaT1_string,LambdaT2_string,Combinators)]). (*@\label{code:test_call}@*)
--- ((C((BC)((B(BB))((B(BC))((C((BB)((BC)((B(BB))I))))I)))))I) ---
l(l(_X8)) -> r(r(r(l(_X8))))
l(r(r(_X8))) -> r(r(r(r(_X8))))
l(r(l(_X7))) -> r(l(r(_X7)))
r(l(r(_X5))) -> l(r(l(_X5)))
r(r(r(l(_X4)))) -> l(l(_X4))
r(r(r(r(_X5)))) -> l(r(r(_X5)))
r(l(l(_X9))) -> r(r(l(_X9)))
r(r(l(_X7))) -> r(l(l(_X7)))
--- end ---
--- ((C((BB)((BB)((BC)I))))((C((BB)I))I)) ---
l(l(_X4)) -> r(r(r(l(_X4))))
l(r(r(_X4))) -> r(r(r(r(_X4))))
r(r(r(l(_X3)))) -> l(l(_X3))
r(r(r(r(_X3)))) -> l(r(r(_X3)))
l(r(l(_X7))) -> r(l(r(_X7)))
r(l(r(_X3))) -> l(r(l(_X3)))
r(l(l(_X5))) -> r(r(l(_X5)))
r(r(l(_X7))) -> r(l(l(_X7)))
--- end ---
Test: lambda*xyz.C(C(BBx)y)z=lambda*xyz.Cx(yz): true (*@\label{code:test_result}@*)
ok
5>
\end{lstlisting}
Hence, the equation holds; indeed, the value returned by the call to \texttt{abramsky:test} \texttt{(LambdaT1\_string,LambdaT2\_string,Combinators)} at line~\ref{code:test_call} is \texttt{true} (line~\ref{code:test_result}).

Notice again (see Section~\ref{web-app}) that the rewriting rules of partial involutions (\texttt{A <-> B}) are, for programming convenience reasons, internally represented by two expressions, namely, \texttt{A -> B} and \texttt{B -> A}. Moreover, in lines~\ref{code:example:LambdaT1_string} --\ref{code:example:LambdaT2_string} you can see our concrete syntax for $\lambda$-terms: in particular, $\lambda^*$ corresponds to \texttt{l*}, and application between terms is rendered by \verb|@|.

\end{appendices}

\end{document}